\newtheorem{assumption}[theorem]{Assumption}
\newcommand{\ie}{\textit{i.e., }}
\newcommand{\eg}{\textit{e.g., }}
\newcommand{\R}{\mathbb R}    
\newcommand{\gE}{\mathcal{E}}   
\newcommand{\gX}{\mathcal{X}}   
\newcommand{\gY}{\mathcal{Y}}
\DeclareMathOperator*{\argmin}{arg\,min}
\newcommand{\state}{X}
\newcommand{\sprob}{\Psi}
\newcommand{\mP}{\mathbb{P}}
\newcommand{\mE}{\mathbb{E}}
\pgfplotsset{compat=1.18}
\pgfplotsset{compat=newest}
\definecolor{newblue}{RGB}{189,215,239} 
\definecolor{newgray}{RGB}{214,214,214} 
\definecolor{neworange}{RGB}{241,205,177} 
\definecolor{newyellow}{RGB}{255,193,19} 
\definecolor{newgreen}{RGB}{202,223,184} 
\title[Adaptive PSC with Language Guidance]{Online Adaptive Probabilistic Safety Certificate with Language Guidance}
\author{%
 \Name{Zhuoyuan Wang}\thanks{Equal contribution.} \Email{zhuoyuaw@andrew.cmu.edu}\\
 \addr Carnegie Mellon University
 \AND
 \Name{Xiyu Deng}\footnotemark[1] \Email{xiyud@andrew.cmu.edu}\\
 \addr Carnegie Mellon University
 \AND
 \Name{Hikaru Hoshino}\footnotemark[1] \Email{hoshino@eng.u-hyogo.ac.jp}\\
 \addr University of Hyogo
 \AND
 \Name{Yorie Nakahira} \Email{yorie@cmu.edu}\\
 \addr Carnegie Mellon University%
 \vspace{-1em}
}
\begin{document}

\maketitle

\vspace{-1em}

\begin{abstract}%

Achieving long-term safety in uncertain/extreme environments while accounting for human preferences remains a fundamental challenge for autonomous systems. Existing methods often trade off long-term guarantees for fast real-time control and cannot adapt to variability in human preferences or risk tolerance. To address these limitations, we propose a language-guided adaptive probabilistic safety certificate (PSC) framework that guarantees long-term safety for stochastic systems under environmental uncertainty while accommodating diverse human preferences. The proposed framework integrates natural-language inputs from users and Bayesian estimators of the environment into adaptive safety certificates that explicitly account for user preferences, system dynamics, and quantified uncertainties. Our key technical innovation leverages probabilistic invariance---a generalization of forward invariance to a probability space---to obtain myopic safety conditions with long-term safety guarantees. 
We validate the framework through numerical simulations of autonomous lane-keeping with human-in-the-loop guidance under uncertain and extreme road conditions, demonstrating enhanced safety–performance trade-offs, adaptability to changing environments, and personalization to different user preferences.
Code is available at \url{https://github.com/hoshino06/adaptive_lane_keeping}.


\end{abstract}

\begin{keywords}%
  Safety; Adaptation; Stochastic systems; LLMs; Language-guided control.%
\end{keywords}

\section{Introduction}

Ensuring long-term safety in autonomous systems operating under uncertain and evolving environments is a fundamental challenge. In many real-world scenarios such as autonomous driving, both the environment and the human user’s preferences can vary over time. For instance, a driver may prefer specific driving styles and comfort levels while the road friction and traffic conditions remain unknown. Despite such variability, safety must always remain the top priority. This motivates the need for adaptive safe control frameworks that can efficiently adjust to diverse human preferences and uncertain dynamics, while maintaining provable long-term safety guarantees.
However, existing safe control methods face several limitations. First, most existing approaches lack systematic mechanisms to account for language guidance and users' preferences and risk tolerance, leading to suboptimal performance and alignment. Second, approaches based on adaptive control barrier functions often trade off long-term safety guarantees for real-time efficiency~\citep{taylor2020adaptive,xiao2021adaptive}, while probabilistic reachability-based techniques typically require expensive online computation to remain valid under environment uncertainty~\citep{abate2008probabilistic,chapman2019risk,kariotoglou2013approximate}. 

To address these challenges, we propose a novel language-guided adaptive probabilistic safety certificate (PSC) framework, which ensures long-term safety of stochastic systems in uncertain and adaptive settings while accounting for evolving human specifications and preferences. The overall framework is illustrated in Fig.~\ref{fig:diagram}. Our method enforces forward invariance in the probability space to guarantee long-term safety, achieved by integrating a system parameter estimator that updates the uncertainty representation with a safety certificate applying invariance conditions based on the current estimate. We derive a sufficient safety condition that guarantees long-term safety remains above a user-specified threshold, and show that this condition yields one-step control constraints that can be seamlessly incorporated into optimization-based controllers including model predictive control (MPC) with arbitrary planning horizons.
Unlike conventional control barrier function (CBF) approaches that operate in the state space and rely on hand-crafted barrier functions, the proposed adaptive PSC requires only a characterization of the desired safe set, enabling multi-turn large language models (LLMs)~\citep{yi2024survey,zhang2025survey} to effectively translate natural language instructions into formal safety specifications without compromising control performance. This design facilitates efficient online adaptation to evolving user guidance while ensuring long-term probabilistic safety under model uncertainty.
The main contributions of the paper are as follows. 

\begin{figure}[t!]
    \centering
    \includegraphics[width=\columnwidth]{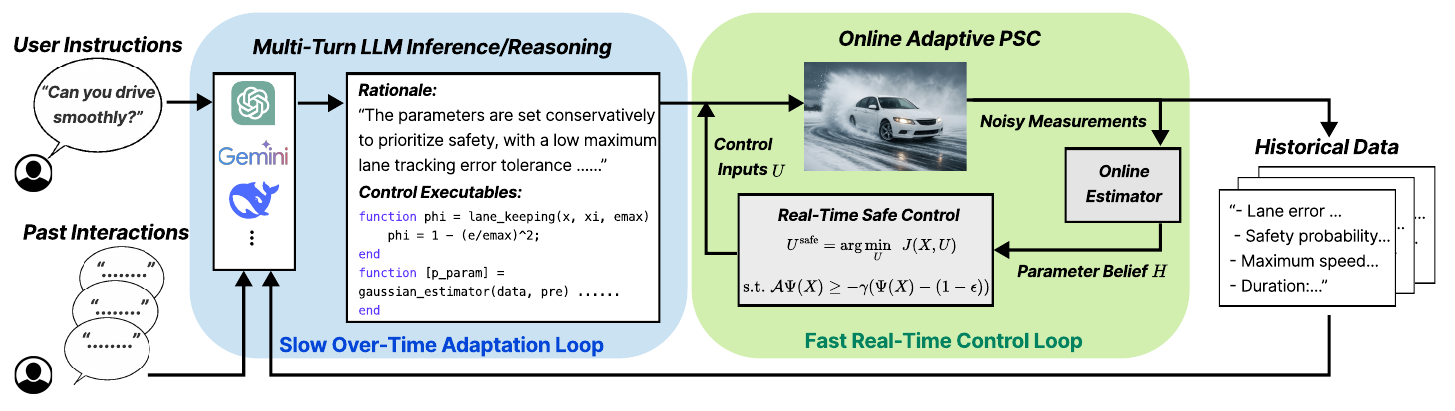}
    \vspace{-1.2em}
    \caption{The proposed language-guided adaptive probabilistic safety certificate (PSC) framework.
    }
    \vspace{-1em}
    \label{fig:diagram}
\end{figure}

\begin{itemize}
[leftmargin=12pt, itemsep=1pt, topsep=1pt, parsep=1pt, partopsep=1.5pt]
    \item We propose a novel adaptive probabilistic safety certificate that enforces forward invariance in the probability space, to systematically ensure long-term safety of uncertain systems with user-desired risk levels (Theorem~\ref{thm:main_theorem_aug}).

    \item We develop a holistic multi-turn LLM-based adaptive safe control framework that can translate natural language guidance into formal safety specifications, and adapt to changing environments and user preferences while maintaining long-term safety (Fig.~\ref{fig:diagram},~\ref{fig:llm_qualitative} and Table~\ref{tab:llm_case_1},~\ref{tab:llm_case_2}).

    \item We demonstrate the effectiveness of the proposed method in autonomous lane-keeping scenarios under uncertain and extreme road conditions. The results show significantly improved safety vs. efficiency trade-offs, effective adaptation to evolving human preferences, and sustained long-term safety with reduced online computational cost (Fig.~\ref{fig:horizon_vs_performance},~\ref{fig:tradeoff} and Table~\ref{tab:llm_case_1},~\ref{tab:llm_case_2}). 
\end{itemize}

\vspace{-1em}

\section{Related Work}

\textbf{Adaptive Safe Control.}
A wide range of safe control techniques has been developed for systems subject to noise and uncertainties when the safety specifications are given and fixed.   
For example, robust control barrier functions (CBFs) have been proposed to handle worst-case bounded disturbances in deterministic systems~\citep{cosner2021measurement,jankovic2018robust,Prajna2007}. To further address model uncertainties, adaptive CBFs~\citep{taylor2020adaptive,xiao2021adaptive} and their robust~\citep{lopez2020robust, castaneda2021pointwise, long2022safe} and high-order~\citep{xiao2021high,cohen2022high,nguyen2016exponential} variants have been introduced. Many of these methods consider bounded disturbances and worst-case framework, and may not be immediately applicable for stochastic systems. 

\noindent For stochastic systems, methods such as probabilistic CBFs~\citep{luo2020multi,Lyu2021,fan2020bayesian} and conditional value-at-risk (CVaR)-based approaches~\citep{ahmadi2021risk} have been developed. 
On the other hand, stochastic reachability or probabilistic reachability techniques are developed to estimate the likelihood that the system trajectory remains within safe regions or reaches desired targets in the presence of unbounded noise and uncertainties~\citep{fisac2018general, abate2008probabilistic,chapman2019risk,kariotoglou2013approximate,abate2006probabilistic,vasileva2020probabilistic}.
These techniques are often subject to a stringent tradeoff between assuring long-term safety vs. reducing real-time computation and may not be easily adapted to changing environments and human preferences~\citep{wang2026myopically}. In this paper, we allow the safe sets to depend on language guidance from human users, and the safety certificates to adapt to changing environments.


\noindent Model predictive control (MPC) methods have also been widely studied for decision-making under uncertainty, including chance-constrained, stochastic, and probabilistic MPC formulations~\citep{farina2016stochastic,heirung2018stochastic,mesbah2016stochastic}. The proposed safety certificate can be seamlessly incorporated into MPC as additional constraints. Unlike conventional MPC approaches, where the planning horizon must exceed the safety constraint horizon to maintain long-term safety, the proposed safety certificate can enable longer-term safety with shorter MPC horizons. With augmented risk information, the proposed method can decouple the computational complexity of MPC from the long-term safety horizon (see Fig.~\ref{fig:horizon_vs_performance}-\ref{fig:trajectory_H20}).


\noindent \textbf{Language-Guided Control.} 
Large language models (LLMs) have demonstrated the effectiveness of multi-turn prompting for generating contextually consistent and coherent responses through iterative refinement~\citep{yi2024survey, zhang2025survey}. To incorporate human intent and preferences into control, LLMs have been combined with model predictive control (MPC) frameworks to interpret and integrate natural-language guidance into control objectives~\citep{maher2025llmpc,miyaoka2024chatmpc,sanyal2025asma}.  
LLMs have also been used to generate context-aware behaviors~\citep{deng2024context,khan2025safety,wu2025selp} or to couple language reasoning with reachability analysis for safe motion planning~\citep{bayat2025llm,hafez2025safe}. These approaches highlight the potential of LLMs as adaptive reasoning modules that translate high-level language inputs into control-relevant information.

\noindent Recent studies have explored LLM-guided safe control, where language models assist in constructing control barrier functions (CBFs) and defining corresponding safety conditions for deterministic systems~\citep{khan2025safety,sanyal2025asma,brunke2025semantically}. However, since multiple CBF formulations can satisfy the same safety requirement yet induce markedly different control behaviors, these approaches often rely on manual design of CBFs by human engineers or unreliable automatic generation of barrier functions by LLMs. 
In contrast, the proposed adaptive probabilistic safety certificate (PSC) framework leverages LLMs solely to generate high-level safe set specifications, which are subsequently translated into formal safety conditions using stochastic system models and Bayesian estimators. This yields a principled and reliable process for constructing safety certificates that ensure long-term safety and alignment in stochastic, non-stationary environments.

\section{Problem Formulation}
\label{sec:problem_formulation}

\textbf{Preliminary.} 
Let $f:\gX \rightarrow \gY$ represent that $f$ is a mapping from space $\gX$ to space $\gY$. Let \(\mathds{1}\{\gE\}\) be an indicator function, which takes \(1\) when condition \(\gE\) holds and \(0\) otherwise. Given events $\gE$ and $\gE_c$, let $\mP(\gE)$ be the probability of $\gE$ and $\mP(\gE | \gE_c)$ be the conditional probability of $\gE$ given the occurrence of $\gE_c$. Given random variables $X$ and $Y$, let $\mE[X]$ be the expectation of $X$ and $\mE[X | Y = y]$ be the conditional expectation of $X$ given $Y=y$. We use upper-case letters (\eg $Y$) to denote random variables and lower-case letters (\eg $y$) to denote their specific realizations.

We consider adaptive safe control problems with language guidance, where human user provides task-related commands and preferences, and an adaptive controller is designed to follow human instructions, while maintaining \textit{long-term} safety of the system despite system uncertainties. 
Specifically, we consider a discrete-time stochastic control system with unknown parameters and human-specified safety requirements. The system evolves according to  
\vspace{-2pt}
\begin{align}
    X_{k+1} \sim F(dX_{k+1} \mid X_k, U_k; \xi),  
    \label{eq:discrete_dynamics}
\end{align}
where $X_k \in \mathcal{X} \subset \mathbb{R}^n$ denotes the system state, $U_k \in \mathcal{U} \subset \mathbb{R}^m$ is the control input, and $\xi \in \mathbb{R}^l$ is an \emph{unknown system parameter}. The transition kernel 
$F: \mathcal{B}(\mathcal{X}) \times \mathcal{X} \times \mathcal{U} \times \mathbb{R}^l \to [0,1]$ 
characterizes the stochastic dynamics, with $\mathcal{B}(\mathcal{X})$ denoting the Borel $\sigma$-algebra of $\mathcal{X}$~\citep{hernandez2012discrete}.
A human-specified safety requirement is described by a function $\phi: \mathbb{R}^n \times \mathbb{R}^l \to \mathbb{R}$, and the corresponding safe set is defined as
\begin{align}
   \mathcal{C}(\xi) := \{ x \in \mathbb{R}^n : \phi(x, \xi) \ge 0 \},
   \label{eq:safe_set}
\end{align}
where $\phi$ captures safety constraints. 
While prior works manually design $\phi$ to improve safe control performance~\citep{ames2019control}, our approach is agnostic to different yet equivalent representations of $\phi$, enabling seamless integration with large language models (LLMs) for automating its construction from human specifications of safety requirements.\footnote{See Appendix~\ref{sec:llm_generated_cbf} for ablation studies demonstrating consistent performance of the proposed method across diverse forms of (LLM-generated) barrier functions.}
We then define an indicator function for long-term safety over a finite horizon $\tau_k = \{k, k+1, \dots, k+T\}$ as 
\vspace{-5pt}
\begin{align}
 \mathds{1}_\mathrm{LS}( X_{k:k+T}, \xi ) :=  
 \begin{cases}
     1, & \text{if}~ X_{k'} \in \mathcal{C}(\xi),~ \forall k' \in \tau_k, \\
     0, & \text{otherwise},
     \vspace{-5pt}
 \end{cases}
\end{align}
where $X_{k:k+T} := \{ X_k, X_{k+1}, \dots, X_{k+T} \}$ denotes the trajectory segment.

To capture epistemic uncertainty in the unknown parameter~$\xi$, we define a sequence of random variables $\{\Xi_i\}_{i \in \mathbb{Z}_+}$ with corresponding probability density functions $\{H_i\}_{i \in \mathbb{Z}_+}$, each reflecting the current estimate of the uncertain parameter at time step~$i$. These distributions characterize uncertainty arising from limited or noisy measurements and can be estimated using many estimators, including Bayesian neural networks~\citep{hafner2018reliable, gal2016dropout} and Gaussian processes~\citep{shahriari2015taking, pan2017prediction}. The estimator and its configuration can be determined based on human specifications, encoding user preferences and desired adaptation behavior.
Then, for a given nominal controller~$\pi: \mathcal{X} \times \mathbb{R}^l \rightarrow \mathcal{U}$ and parameter belief~$H_i$, the long-term safety probability from state~$x$ is defined as
\begin{align}
   \Psi^{\pi}_{H_i}(x) 
   &= \mathbb{E}_{\Xi_i \sim H_i}\big[ \mathds{1}_\mathrm{LS}(X_{k:k+T}, \Xi_i) \mid X_k = x \big] \notag\\
   &= \int_{\mathcal{B}(\mathcal{X})^T \times \mathbb{R}^l}
      \mathds{1}_\mathrm{LS}(X_{k:k+T}, \Xi_i)
      \mathbb{P}(dX_{k+1:k+T} \mid X_k = x, \pi, \Xi_i)
      H_i(\Xi_i) \, d\Xi_i,
   \label{eq:safety_prob}
\end{align}
where the control inputs $U_k$ is computed based on the policy $\pi$ and the trajectory distribution is given by
\vspace{-10pt}
\begin{align}
   \mathbb{P}(dX_{k+1:k+T}\mid X_k=x, \Xi_i)
   = \prod_{k'=k}^{k+T-1}
     F(dX_{k'+1} \mid X_{k'}, \pi(X_{k'}, \hat{\xi}_i); \Xi_i).
\end{align}
Here, $\hat{\xi}_i = \int_{\mathbb{R}^l} \Xi_i H_i(\Xi_i) \, d\Xi_i$ represents the current estimate of the unknown parameter, which is used in the nominal controller $\pi(X_k, \hat{\xi}_i)$.  
Since the dynamics~\eqref{eq:discrete_dynamics} is time-invariant, $\Psi^{\pi}_{H_i}(x)$ does not depend on $k$.

The goal is to design an adaptive safe control policy $\pi_{\text{safe}}$ that updates both the control and estimation strategy to ensure that, at every time step, the expected long-term safety probability maintains higher than a risk tolerance threshold $1-\epsilon$, \ie 
\begin{align}
    \mathbb{E}_{\pi_{\text{safe}}}\big[ \Psi^{\pi}_{H_k}(X_k) \big] \ge 1 - \epsilon,
    \quad \forall k \in \mathbb{Z}_+.
    \label{eq:safe_control_objective}
\end{align}
The expectation is taken over the distribution of $X_k$ conditioned on the initial state $X_0$ with the closed-loop policy $U_t = \pi_{\text{safe}}(X_t, \hat{\xi}_t)$ for $t \in \{k, \dots, k+T\}$.  
In addition to maintaining long-term safety~\eqref{eq:safe_control_objective}, the controller minimizes a performance objective $J(X_k, U_k)$, which may vary based on user preferences.
For instance, one may consider $J(X_k, U_k) = \mathbb{E} [\sum_{t=k}^{k+T} c(X_t, U_t)]$, where $c(\cdot, \cdot)$ is a cost function on states and controls. 
This formulation captures the interaction between human-specified safety requirements, estimator preferences, and performance objectives under a unified adaptive control framework that guarantees long-term safety in stochastic systems with unknown parameters. See~\cite{wang2026myopically} for a detailed discussion of the advantages of the long-term safety formulation compared to existing stochastic safe control methods, such as~\cite{luo2020multi}.


\section{Proposed Method}
\label{sec:proposed_method}

In this section, we present the proposed safe control framework with language guidance. The overall architecture is illustrated in Fig.~\ref{fig:diagram}. The framework consists of two interconnected loops: an inner safe control loop, which operates in \textit{real time} based on the adaptive probabilistic safety certificate (PSC), and an outer multi-turn LLM loop, which evolves at a \textit{slower timescale} to incorporate human instructions and historical control data. The inner loop ensures long-term safety under uncertainty and diverse human-specified requirements, while the outer loop enables the system to adapt to human preferences and correct potentially erroneous information over time. In the following, we present the proposed adaptive PSC in Section~\ref{sec:adaptive_PSC_method} and its integration with the LLM in Section~\ref{sec:llm_integration}.

\subsection{Adaptive Probabilistic Safety Certificate}
\label{sec:adaptive_PSC_method}

In this section, we introduce the adaptive probabilistic safety certificate that guarantees long-term safety despite uncertainties in the dynamics.
We start by defining a discrete-time generator $\mathcal{A}^{U_k}_{H_{k+1}}$ for the safety probability, which depends on the control input $U_k$ and the density function $H_{k+1}$ for the next time step:
\vspace{-2pt}
\begin{align}
\label{eq:generator_A}
    \mathcal{A}^{U_k}_{H_{k+1}} \Psi^{\pi}_{H_k} (X_k) & := \dfrac{ \mathbb{E}[\Psi^\pi_{H_{k+1}}(X_{k+1}) | X_k, U_k] - \Psi^\pi_{H_k}(X_k) }{\Delta t},
\end{align}
where $\Delta t$ is the sampling time interval. 
Then, to ensure the long-term safety of the system, we propose to constrain the control action $U_k$ to satisfy the following conditions at all time $k\in\mathbb{Z}_+$: 
\begin{align}
\label{eq:safety_condition_each_Zt_aug}
    \mathcal{A}^{U_k}_{H_{k+1}} \Psi^\pi_{H_k}(X_k) \geq -\gamma(\Psi^\pi_{H_{k}}(X_k)-(1-\epsilon)),
\end{align}
where $\gamma: \R \rightarrow \R$ is a function that satisfies the following requirements:
\begin{itemize}
    \item[] Requirement 1: $\gamma(q)$ is strictly concave or linear and strictly increasing in $q$.
    \item[] Requirement 2: $\gamma(q) \leq q$, ~$\forall q \in \mathbb{R}$.
\end{itemize}

\begin{remark}
    In practice, choose $\gamma(q) = aq$ where $a \in [0, 1]$ will satisfy the two requirements. The proposed safety condition~\eqref{eq:safety_condition_each_Zt_aug} is a control constraint on $U_k$. This constraint is linear in $U_k$ if the system dynamics~\eqref{eq:discrete_dynamics} is control affine~\citep{wang2022myopically, wang2026myopically}. Numerical methods for evaluating~\eqref{eq:generator_A} are provided in Appendix~\ref{sec:appendix_psc_calculation}. 
\end{remark}

\begin{theorem}
\label{thm:main_theorem_aug}
Consider the stochastic dynamics given by \eqref{eq:discrete_dynamics} with a nominal controller $\pi$ and a sequence of probability density functions $H_0, H_1, \dots$ for estimating $\xi$. 
Suppose the state and parameter estimation originate from $X_0$ and $H_0$ satisfy the following
\begin{align}
  \Psi^\pi_{H_{0}}(X_0) > 1 - \epsilon,
\end{align}
and the control action $U_k$ given by the controller $\pi_\mathrm{safe}$ satisfies \eqref{eq:safety_condition_each_Zt_aug} at all time.
Then, the following condition holds:
\vspace{-8pt}
\begin{align}
  \label{eq:satisfy_control_policy_aug}
    \mathbb{E}_{\pi_\mathrm{safe}}[ \Psi^\pi_{H_k}(X_k) ] \geq 1-\epsilon, 
    \quad \forall k \in \mathbb{Z}^+. 
\end{align}
\end{theorem}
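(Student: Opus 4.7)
The plan is to reduce the claim to a one-dimensional recursion on $m_k := \mathbb{E}_{\pi_\mathrm{safe}}[\Psi^\pi_{H_k}(X_k)] - (1-\epsilon)$ and close it by induction on $k$. First, I would unfold the generator definition in~\eqref{eq:generator_A} so that the hypothesis~\eqref{eq:safety_condition_each_Zt_aug} is rewritten pointwise as
\[
\mathbb{E}\!\left[\Psi^\pi_{H_{k+1}}(X_{k+1}) \,\big|\, X_k, U_k\right] \;\geq\; \Psi^\pi_{H_k}(X_k) \;-\; \Delta t\,\gamma\!\left(\Psi^\pi_{H_k}(X_k) - (1-\epsilon)\right).
\]
Taking outer expectation under the closed-loop distribution induced by $\pi_\mathrm{safe}$ (which averages over both $X_k$ and the online posterior $H_k$) and applying the tower rule collapses the left-hand side to $\mathbb{E}[\Psi^\pi_{H_{k+1}}(X_{k+1})]$.

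The next step is Jensen's inequality. Requirement~1 makes $\gamma$ concave, so $\mathbb{E}[\gamma(Y)] \leq \gamma(\mathbb{E}[Y])$ for $Y := \Psi^\pi_{H_k}(X_k) - (1-\epsilon)$; because of the minus sign in front of $\Delta t\,\gamma(\cdot)$ this flips the direction we need and yields the clean scalar recursion $m_{k+1} \geq m_k - \Delta t\,\gamma(m_k)$. Now I would induct on $k$. The assumed initial condition gives $m_0 > 0$. For the inductive step, suppose $m_k \geq 0$: if $\gamma(m_k) \leq 0$ the recursion immediately gives $m_{k+1} \geq m_k \geq 0$, and if $\gamma(m_k) > 0$ Requirement~2 ($\gamma(q) \leq q$) yields $m_{k+1} \geq (1-\Delta t)\,m_k \geq 0$ under the standard sampling convention $\Delta t \leq 1$. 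Either way $m_{k+1} \geq 0$, which is exactly~\eqref{eq:satisfy_control_policy_aug}.

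The main obstacle is the Jensen move: the outer expectation mixes randomness in both the state $X_k$ and the sequence of posteriors $H_k$ produced by the online estimator, and concavity is what lets one exchange $\mathbb{E}$ and $\gamma$ with the correct sign. This is the sole purpose of Requirement~1 (the linear case is a trivial equality). Requirement~2 plays a much narrower role --- it is only invoked to close the inductive step by ruling out a one-shot drop below $1-\epsilon$, and the strict inequality in the initial condition merely provides a margin at $k=0$; the induction itself propagates the weak bound $m_k \geq 0$.
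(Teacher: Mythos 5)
Your proof is correct and follows essentially the same route as the paper's: induction on $k$, unfold the generator and take expectation via the tower rule, apply Jensen's inequality to the concave $\gamma$, then invoke Requirement~2 and $\Delta t \le 1$ to propagate $m_k \ge 0$. The one notable difference is that the paper routes the Jensen step through a conditional partition into the events $\{\Psi < 1-\epsilon\}$ and $\{\Psi \ge 1-\epsilon\}$, applying Jensen twice to land on $\gamma(q)$; your direct application of $\mathbb{E}[\gamma(Y)] \le \gamma(\mathbb{E}[Y])$ collapses that entire block in one line and is a genuine simplification with no loss of rigor.
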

\vspace{-3pt}
\begin{proof}
    See Appendix~\ref{sec:appendix_proof}.
\end{proof}

Theorem~\ref{thm:main_theorem_aug} says that given the initial state to be sufficiently safe, with the constraint of the safety certificate~\eqref{eq:safety_condition_each_Zt_aug}, the resulting policy $\pi_\mathrm{safe}$ ensures long-term safety in the sense that the expected long-term probability $\Psi^\pi_{H_k}(X_k)$ is always above the desired tolerance.
With these theoretical guarantees, the safe controller $\pi_{\text{safe}}$ can be calculated at each time step using the following optimization.
\vspace{-2pt}
\begin{align}
\label{eq:conditioning}
    \begin{aligned}
        U_k^{\text{safe}}(X_k, H_k) = && \argmin_{U} & \ \ J(\pi(X_k, \hat{\xi_k}),U) \\
        && \text{s.t.} & \ \ \eqref{eq:safety_condition_each_Zt_aug},
    \end{aligned}
\end{align}
where $\pi$ is the reference controller\footnote{The reference controller $\pi$ is usually set to be the nominal controller $\pi$ to estimate safety probability, but the theory does not restrict the two to be the same.}, and $J$ is a generic cost function that characterizes human's preference on the control performance. (\eg quadratic cost on the distance between the safe control and a performance-oriented nominal control). Here $\hat{\xi_k} = \int_{\R^l} \Xi_k H_k(\Xi_k) d\Xi_k$ is the current estimate of the unknown parameter at step $k$. 
Similarly, the proposed adaptive safety certificate constraint~\eqref{eq:safety_condition_each_Zt_aug} can also be integrated to the model predictive control (MPC) framework as below.
\vspace{-3pt}
\begin{align}
\label{eq:conditioning_mpc}
    \begin{aligned}
        U_k^{\text{safe}}(X_k, H_k) = && \argmin_{U_k} & \ \ J(X_{k:k+T_{\mathrm{mpc}}}, U_{k:k+T_{\mathrm{mpc}}}) \\
        && \text{s.t.} & \ \ \eqref{eq:discrete_dynamics} , \quad \forall t \in [k, k+T_{\mathrm{mpc}}]\\
        &&&  \ \ \eqref{eq:safety_condition_each_Zt_aug}, \quad t = k
    \end{aligned}
\end{align}
where $T_{\mathrm{mpc}}$ is the MPC planning horizon. Note that the long-term safety constraint~\eqref{eq:safety_condition_each_Zt_aug} only needs to be enforced at the current time step $k$, and the long-term safety horizon $T$ can be different than the MPC horizon $T_{\mathrm{mpc}}$. This offers real-time computational advantages and strong safety guarantees with a short-horizon MPC, as demonstrated in the experiments in Section~\ref{sec:adaptive_mpc_exp}.
The overall adaptive safe control algorithm is shown in Appendix~\ref{sec:appendix_psc_alg}.

\subsection{Integration with Language Guidance}
\label{sec:llm_integration}


We integrate large language models (LLMs) with the proposed adaptive probabilistic safety certificate (PSC) to enable adaptation to human specifications and preferences while ensuring long-term safety under environmental uncertainties. The overall LLM integration pipeline is illustrated in Fig.~\ref{fig:diagram} (highlighted in blue).
The LLM adaptation loop operates on a slower timescale than the online adaptive safe control and therefore does not require real-time inference or reasoning. At initialization, the user provides natural language instructions describing the desired control behavior. Through a system-level prompt, an inference LLM interprets these instructions and generates rationales and control executables, such as safe set designs, estimator priors, and estimator update rules, which are passed to the adaptive PSC for control execution. Historical data on control performance, safety satisfaction, and estimator updates are collected and, in subsequent runs, provided together with the user’s latest instruction to a reasoning LLM that produces updated executables for control. By jointly leveraging human inputs and accumulated data, the multi-turn LLM can automatically correct erroneous human information and improve control performance, as demonstrated in Section~\ref{sec:exp_llm_mpc}. Implementation details of the LLM integration, including system-level prompts and algorithmic procedures, are provided in Appendix~\ref{sec:appendix_llm_implementation}.

\section{Case Study: Adaptive Lane Keeping}
\label{sec:experiment}

\subsection{Problem Setting}

We consider a challenging lane-keeping problem to illustrate the effectiveness of the proposed method.  
Whereas lane-keeping problems are usually addressed under the assumptions of a constant longitudinal velocity and a linear tire model~\citep{rossetter2006,ames2017control}, we consider a lane-keeping problem under extreme road conditions where the friction coefficient can be low and uncertain. 
In wet or icy road conditions, tire forces may saturate, and coupled longitudinal and lateral vehicle dynamics affect tire stability~\citep{YW2022}. 

To simulate these complex dynamics, we use a three-degree of freedom (3-DOF) vehicle model with the LuGre combined-slip tire model to capture nonlinearities under extreme conditions~\citep{Huang2017,li2023}. 
The system state consists of three components: the vehicle motion $x_\mathrm{vehicle}$, the wheel dynamics $x_\mathrm{wheel}$, and the vehicle position $x_\mathrm{road}$ relative to the road coordinate. Specifically, the system state is
\begin{align}
\label{eq:vehicle_state}
 x = 
 [ \,\underbrace{v_x, \, v_y, \, r, \delta}_{x_\mathrm{vehicle}}, \,\underbrace{\omega_\mathrm{fl}, \, \omega_\mathrm{fr}, \, \omega_\mathrm{rl},  \omega_\mathrm{rr}, \, \tau_\mathrm{e}}_{x_\mathrm{wheel}}, \,\underbrace{ s, \, e, \,\psi }_{x_\mathrm{road}} \, ]^\top.
\end{align}
Here, the state $x_\mathrm{vehicle}$ consists of the vehicle's longitudinal and lateral velocities $v_x$ and $v_y$, the yaw rate $r$, and the steering angle $\delta$. 
The state $x_\mathrm{wheel}$ consists of the wheel's angular velocities $\omega_\mathrm{fl}$, $\omega_\mathrm{fr}$, $\omega_\mathrm{rl}$ and $\omega_\mathrm{rr}$, and the torque $\tau_\mathrm{e}$, which is greater than or equal to 0 when accelerating and smaller than 0 when braking. 
The state $x_\mathrm{road}$ consists of the distance $s$ traveled from a reference point (starting point), the lateral error $e$ from the center line of the road, and the relative heading error $\psi$. 
The control action is $u = [\Delta \delta, \, \Delta \tau_\mathrm{e}  ]^\top$, where $\Delta \delta$ and $\Delta \tau_\mathrm{e}$ stand for the rates of change of $\delta$ and $\tau_\mathrm{e}$, respectively. The unknown parameter $\xi$ in the dynamics is the friction coefficient $\mu$.
The detailed vehicle dynamics model and parameters used can be found in Appendix~\ref{sec:appendix_vehicle_model}.

We formulate the control objective by using the safe set $\mathcal{C}$ defined by \eqref{eq:safe_set}, with 
\vspace{-3pt}
\begin{align}
\label{eq:lane_error_safe_set}
    \phi(x, \xi) =  1 - \left( \dfrac{e}{e_\mathrm{max}} \right)^2,
\end{align}
where $e_\mathrm{max}$ is the allowable limit of the absolute value of the lateral error $e$ and can have different values given different levels of conservativeness. We use a Bayesian estimator for unknown parameter update of $\xi$, and the mean $\mu_k$ and the variance $\sigma_k^2$ of the Gaussian posterior $P_k(\mu | H_k)$ are sequentially updated by the measurement $M_k$ of the friction coefficient as follows~\citep{murphy2007}:
\begin{align}
\label{eq:estimator_update}
  \mu_k &= \dfrac{ \bar{\sigma}^2\mu_{k-1}+\sigma_{k-1}^2 M_k }{ \bar{\sigma}^2+\sigma_{k-1}^2 }, \quad
  \sigma_k^2 = \dfrac{ \bar{\sigma}^2 \sigma_{k-1}^2}{ \bar{\sigma}^2 +\sigma_{k-1}^2 },
\end{align}
where $\bar{\sigma}^2$ is the variance of $M_k$, reflecting the measurement noise level.

\subsection{Long-Term Safety and Efficiency under Uncertainty}
\label{sec:adaptive_mpc_exp}

We first demonstrate that the proposed adaptive probabilistic safety certificate ensures long-term safety under diverse uncertain environments, and achieves better safety vs. efficiency trade-offs. Specifically, we investigate the integration of the adaptive PSC with model predictive control (MPC). Since standard control barrier function (CBF) constraints with~\eqref{eq:lane_error_safe_set} will result in infeasibility, we examine the following two baseline controllers: (i) Adaptive MPC (\textbf{AMPC}), a nonlinear MPC scheme employing the LuGre tire model to enhance vehicle stability under extreme road conditions~\citep{li2023}; and (ii) Control-Dependent Barrier Functions (\textbf{CDBF})~\citep{Huang2021,YW2022}, a variant of CBF that combines MPC with safety constraints to handle unstable tire forces and coupled longitudinal--lateral dynamics. 
MPC formulations and implementation details are provided in Appendix~\ref{sec:sub_appendix_general_setup} and online computation times are reported in Appendix~\ref{sec:sub_appendix_adaptive_psc}.

\begin{figure}[t!]
    \centering

    \begin{minipage}[b]{0.40\linewidth}
        \centering
        \includegraphics[width=\linewidth]{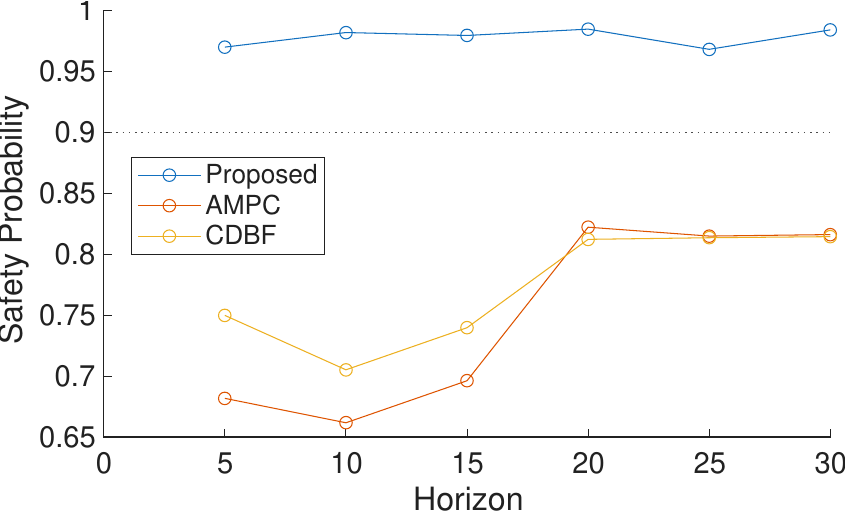}
        \caption{Safety probability vs. MPC horizon $T_{\mathrm{mpc}}$.}
        \label{fig:horizon_vs_performance}
    \end{minipage}
    \hfill
    \begin{minipage}[b]{0.25\linewidth}
        \centering
        \includegraphics[width=\linewidth]{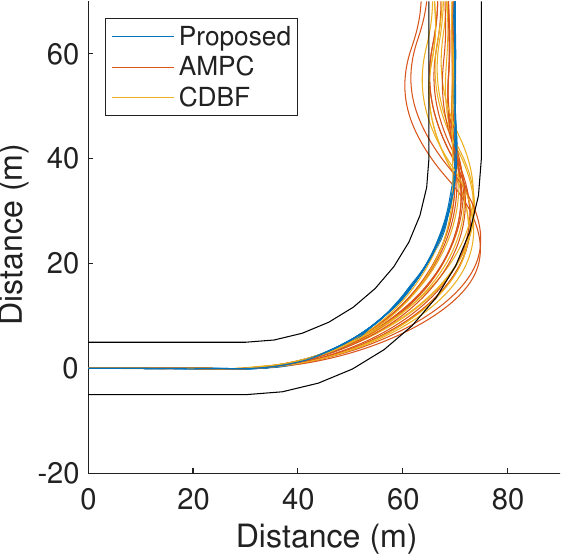}
        \caption{Vehicle trajectories ($T_{\mathrm{mpc}}=10$).}
        \label{fig:trajectory_H10}
    \end{minipage}
    \hfill
    \begin{minipage}[b]{0.25\linewidth}
        \centering
        \includegraphics[width=\linewidth]{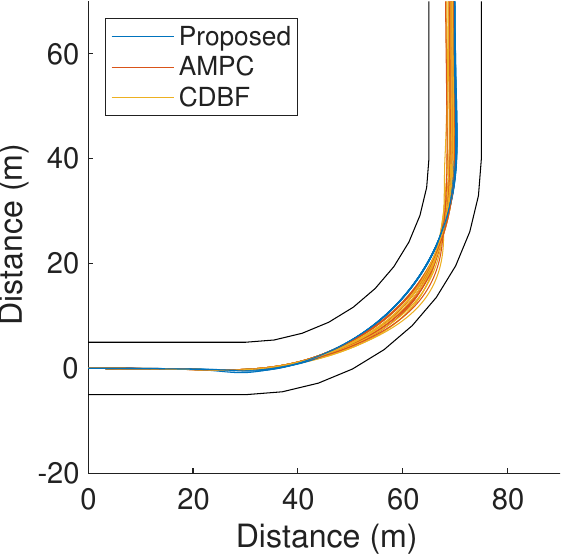}
        \caption{Vehicle trajectories ($T_{\mathrm{mpc}}=20$).}
        \label{fig:trajectory_H20}
    \end{minipage}
\end{figure}

We tested closed-loop performance through 30 simulations with icy road conditions (unknown ground truth friction coefficient $\mu \in [0.2, 0.4]$) and desired safety probability $1-\epsilon = 0.9$. 
Fig.~\ref{fig:horizon_vs_performance} shows the effect of the prediction horizon $T_{\mathrm{mpc}}$ on performance for each controller. 
It can be seen that with AMPC and CDBF, the safety probability is compromised when $T_{\mathrm{mpc}}$ is reduced, while with the proposed adaptive PSC, the safety probability remains above the desired threshold $1-\epsilon$ regardless of the MPC horizon. 
Fig.~\ref{fig:trajectory_H10} and~\ref{fig:trajectory_H20} show the vehicle trajectories with different MPC horizons $T_{\mathrm{mpc}}$. 
For both AMPC and CDBF, only state constraints are imposed inside the MPC optimization, thus lane keeping behavior will fail at $T_{\mathrm{mpc}}=10$ and can only be ensured with larger $T_{\mathrm{mpc}}=20$.
In contrast, the proposed controller achieves the desired safe behavior even with a shorter MPC horizon, as the probabilistic constraint ensures long-term safety with a myopic evaluation.

\begin{wrapfigure}{r}{0.42\linewidth}
    \vspace{-10pt}
    \centering
    \includegraphics[width=1.1\linewidth]{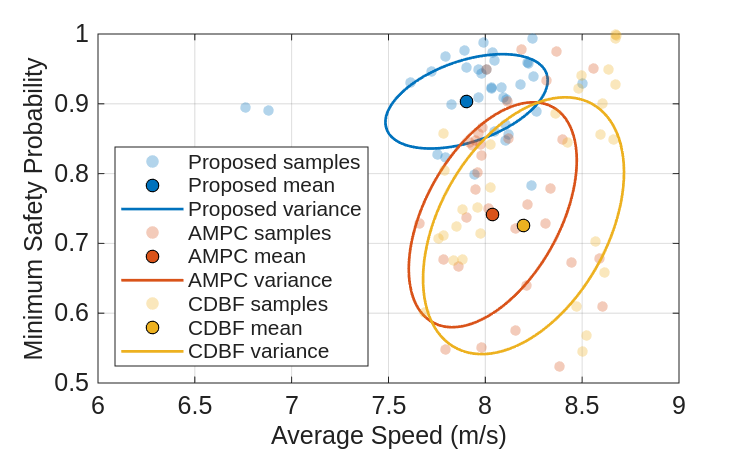}
    \caption{Safety vs. efficiency trade-offs with MPC-based adaptive controls. 
    }
    \label{fig:tradeoff}
    \vspace{-9pt}
\end{wrapfigure}

We evaluated the closed-loop performance of all three methods under varying road conditions with unknown ground truth friction coefficients: icy ($\mu \in [0.3, 0.4]$), wet ($\mu \in [0.5, 0.6]$), and dry ($\mu \in [0.8, 0.9]$), and we tested with different estimator priors, measurement noise levels, and maximum lane-error tolerances, to conduct a comprehensive quantitative analysis. Specifically, we consider MPC horizon $T_{\mathrm{mpc}}=10$, estimator prior means $\xi_0 = [0.3, 0.5, 0.9]$, prior standard deviations $\sigma_0 = [0.05, 0.3]$, measurement noise standard deviations $\bar{\sigma} = [0.05, 0.3]$, and maximum lane-error tolerances $e_\mathrm{max} = [3, 5, 10]$, yielding 108 distinct settings. For each feasible setting\footnote{See Appendix~\ref{sec:sub_appendix_adaptive_psc} for detailed definitions and results on both feasible and infeasible cases.} and each method, we run 20 simulations and visualize the average minimum safety probability and average vehicle velocity as trade-off plots in Fig.~\ref{fig:tradeoff}.
We can see that the proposed adaptive PSC method consistently achieves the desired safety probability, which is significantly higher than that of AMPC and CDBF, while maintaining comparable average vehicle velocities. These results show that the proposed method achieves better trade-offs between safety and efficiency, by ensuring long-term safety without compromising performance.



\subsection{Adaptation with Human Instructions}
\label{sec:exp_llm_mpc}

In this section, we demonstrate how the proposed framework integrates large language models (LLMs) with the adaptive probabilistic safety certificate (PSC) to enable safe and responsive adaptation in accordance with human preferences and specifications. 

We first present a qualitative analysis of the proposed framework in Fig.~\ref{fig:llm_qualitative}, with three rounds of human instructions and feedback under a driving scenario with icy road conditions. It can be seen that with the initial human specification and preference, the vehicle produces a smooth and slower ride. Then with the second round human feedback, the proposed model adapted to the human preferences and achieved more aggressive driving behavior while maintaining safety. 
In the final round, when the human feedback becomes irrelevant, the model reasons from historical data and past interactions to infer intent, producing smooth and efficient driving behavior that balances performance and safety. Note that in all three cases, safety is guaranteed despite evolving user preferences and uncertain environments, thanks to the proposed adaptive PSC.

We then present a quantitative analysis to show that the proposed framework can effectively adapt to human preferences and correct human errors based on historical data, while ensuring safety of the system under uncertainties. We consider two cases: (i) the human exhibits varying preferences over time, to which the framework adapts accordingly, (ii) the human provides erroneous judgments about road conditions, which the framework automatically corrects.
We collect diverse human instructions for different degrees of driving aggressiveness, safety preferences, and judgments about road conditions, with different levels of clarity. We then classify them into different categories and run batch experiments to examine intelligent adaptation and safety assurance. Table~\ref{tab:llm_case_1} shows the case where the human first requests an aggressive ride in Run~1 and then prefers a conservative ride in Run~2. We measure the center-lane lateral deviation (\textit{Lateral}), average velocity (\textit{Speed}), and empirical safety (\textit{Safety}) of the system. It can be seen that with the proposed framework, both the lane deviation and the average velocity are reduced according to human preference, while safety is maintained. Table~\ref{tab:llm_case_2} shows the case where the human incorrectly assumes a dry road when it is actually icy. Leveraging historical data including trajectory information and estimator \textit{Posterior}, the framework refines the \textit{Prior} estimate of the friction coefficient. Therefore, despite persistent erroneous human instruction, Run~2 achieves smoother trajectories compared to Run~1, while maintaining safety and efficiency.
We also evaluate robustness across different LLMs, including GPT-4o-mini, Gemini 2.5 Flash, and DeepSeek-Chat, demonstrating consistent performance. Full experiment results and ablations are provided in Appendix~\ref{sec:sub_appendix_llm_setup}.

\begin{figure}
    \centering
    \includegraphics[width=\columnwidth]{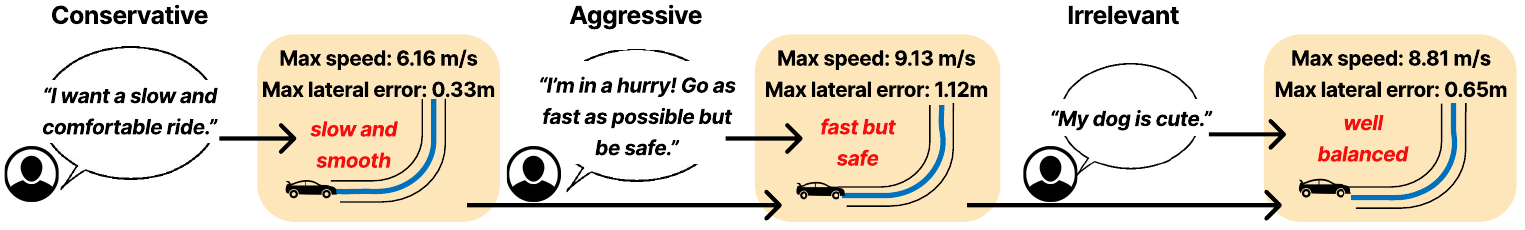}
    \vspace{-0.8em}
    \caption{Qualitative results across three rounds of human instructions and feedback.}
    \vspace{-0.3em}
    \label{fig:llm_qualitative}
\end{figure}

\begin{table}[t!]
\centering
\scriptsize
\begin{tabular}{lccc ccc cc}
\toprule
\textbf{Icy Road $\mu \in [0.3, 0.4]$} &
\multicolumn{3}{c}{Run~1: Aggressive Preference} &
\multicolumn{3}{c}{Run~2: Conservative Preference} &
\multicolumn{2}{c}{Differences} \\
\cmidrule(lr){1-1} \cmidrule(lr){2-4} \cmidrule(lr){5-7} \cmidrule(lr){8-9}
\textbf{LLM} &
\uline{\textit{Lateral}} & \uline{\textit{Speed}} & \uline{\textit{Safety}} &
\uline{\textit{Lateral}} & \uline{\textit{Speed}} & \uline{\textit{Safety}} &
\uline{\textit{Lateral}} & \uline{\textit{Speed}} \\
\midrule\addlinespace[2pt]\midrule
GPT-4o-mini & $0.78 \!\pm\! 0.65$ & $8.17 \!\pm\! 1.54$ & $99\%$
            & $0.55 \!\pm\! 0.38$ & $7.73 \!\pm\! 1.40$ & $100\%$
            & $-0.23$ & $-0.44$ \\
Gemini 2.5 Flash & $0.90 \!\pm\! 0.54$ & $8.43 \!\pm\! 1.01$ & $99\%$
                    & $0.39 \!\pm\! 0.36$ & $6.92 \!\pm\! 1.52$ & $96\%$
                    & $-0.51$ & $-1.51$ \\
DeepSeek-Chat & $0.87 \!\pm\! 0.55$ & $8.52 \!\pm\! 1.51$ & $100\%$
                 & $0.43 \!\pm\! 0.41$ & $7.17 \!\pm\! 1.64$ & $100\%$
                 & $-0.44$ & $-1.35$ \\
\bottomrule
\end{tabular}
\vspace{-5pt}
\caption{Quantitative results for case (i): preference adaptation based on human instructions.}
\label{tab:llm_case_1}
\end{table}


\begin{table}[t!]
\centering
\scriptsize
\begin{tabular}{lccc ccc cc}
\toprule
\textbf{Icy Road $\mu \in [0.3, 0.4]$} 
& \multicolumn{3}{c}{Run~1: Wrong Premise on Dry Road} 
& \multicolumn{3}{c}{Run~2: Wrong Premise on Dry Road} 
& \multicolumn{2}{c}{Differences} \\[2pt]
\cmidrule(lr){1-1} \cmidrule(lr){2-4} \cmidrule(lr){5-7} \cmidrule(lr){8-9}
\textbf{LLM} & \uline{\textit{Prior}} & \uline{\textit{Posterior}} & \uline{\textit{Safety}} 
& \uline{\textit{Prior}} & \uline{\textit{Posterior}} & \uline{\textit{Safety}} 
& \uline{\textit{Lateral}} & \uline{\textit{Speed}} \\ 
\midrule\addlinespace[2pt]\midrule
GPT-4o-mini & $0.82 \!\pm\! 0.16$ & $0.35 \!\pm\! 0.03$ & $94\%$
            & $ 0.58 \!\pm\! 0.21$ & $0.35 \!\pm\! 0.03$ & $100\%$
            & $-0.14$ & $-0.32$ \\ 
            
Gemini 2.5 Flash   & $0.90 \!\pm\! 0.04$ & $0.36 \!\pm\! 0.03$ & $100\%$
            & $0.56 \!\pm\! 0.26$ & $0.36 \!\pm\! 0.03$ & $100\%$
            & $-0.11$ & $-0.08$ \\ 
            
DeepSeek-Chat & $0.90 \!\pm\! 0.04$ & $0.35 \!\pm\! 0.03$ & $98\%$
            & $0.84 \!\pm\! 0.14$ & $0.36 \!\pm\! 0.03$ & $100\%$
            & $-0.42$ & $+0.04$ \\ 
\bottomrule
\end{tabular}
\vspace{-5pt}
\caption{Quantitative results for case (ii): correcting human error with historical data.}
\label{tab:llm_case_2}
\vspace{-1em}
\end{table}

\vspace{-0.6em}

\section{Conclusion}
\label{sec:conclusion}

\vspace{-0.1em}

In this paper, we proposed a language-guided adaptive probabilistic safety certificate (PSC) to ensure long-term safety of stochastic systems under uncertain dynamics and evolving human specifications. We proved that the proposed safety condition guarantees non-decreasing safety probability over time and can be seamlessly integrated with optimization-based control. We further integrate multi-turn large language models (LLMs) leveraging past data and interactions to account for varying human preferences over time. Through autonomous lane-keeping experiments, we demonstrated that the proposed method maintains consistent long-term safety across varying road conditions, adapts effectively to human preferences, and achieves improved safety–efficiency trade-offs. Future work includes extensions to reinforcement learning and multi-agent settings, with potential adversarial attacks~\citep{hu2025steering}.

\acks{This work is sponsored in part by the National Science Foundation CAREER program (no. 2442948), in part by a grant from Japan Science and Technology Agency, in part by a grant from the Commonwealth of Pennsylvania, Department of Community and Economic Development, and in part by Grant-in-Aid for Scientific Research (KAKENHI) from the Japan Society for Promotion of Science (23K13354, 26K07557). We thank Lordphone Wen for contributions to the ablation experiments in the camera-ready version of this paper.}

\bibliography{citation}

\begin{thebibliography}{59}
\providecommand{\natexlab}[1]{#1}
\providecommand{\url}[1]{\texttt{#1}}
\expandafter\ifx\csname urlstyle\endcsname\relax
  \providecommand{\doi}[1]{doi: #1}\else
  \providecommand{\doi}{doi: \begingroup \urlstyle{rm}\Url}\fi

\bibitem[Abate et~al.(2006)Abate, Amin, Prandini, Lygeros, and Sastry]{abate2006probabilistic}
Alessandro Abate, Saurabh Amin, Maria Prandini, John Lygeros, and Shankar Sastry.
\newblock Probabilistic reachability and safe sets computation for discrete time stochastic hybrid systems.
\newblock In \emph{Proceedings of the 45th IEEE Conference on Decision and Control}, pages 258--263. IEEE, 2006.

\bibitem[Abate et~al.(2008)Abate, Prandini, Lygeros, and Sastry]{abate2008probabilistic}
Alessandro Abate, Maria Prandini, John Lygeros, and Shankar Sastry.
\newblock Probabilistic reachability and safety for controlled discrete time stochastic hybrid systems.
\newblock \emph{Automatica}, 44\penalty0 (11):\penalty0 2724--2734, 2008.

\bibitem[Ahmadi et~al.(2021)Ahmadi, Xiong, and Ames]{ahmadi2021risk}
Mohamadreza Ahmadi, Xiaobin Xiong, and Aaron~D Ames.
\newblock Risk-averse control via cvar barrier functions: Application to bipedal robot locomotion.
\newblock \emph{IEEE Control Systems Letters}, 6:\penalty0 878--883, 2021.

\bibitem[Ames et~al.(2017)Ames, Xu, Grizzle, and Tabuada]{ames2017control}
Aaron~D Ames, Xiangru Xu, Jessy~W Grizzle, and Paulo Tabuada.
\newblock Control barrier function based quadratic programs for safety critical systems.
\newblock \emph{IEEE Transactions on Automatic Control}, 62\penalty0 (8):\penalty0 3861--3876, 2017.

\bibitem[Ames et~al.(2019)Ames, Coogan, Egerstedt, Notomista, Sreenath, and Tabuada]{ames2019control}
Aaron~D Ames, Samuel Coogan, Magnus Egerstedt, Gennaro Notomista, Koushil Sreenath, and Paulo Tabuada.
\newblock Control barrier functions: Theory and applications.
\newblock In \emph{2019 18th European Control Conference (ECC)}, pages 3420--3431. IEEE, 2019.

\bibitem[Bayat et~al.(2025)Bayat, Abate, Ozay, and Jungers]{bayat2025llm}
Amir Bayat, Alessandro Abate, Necmiye Ozay, and Raphael~M Jungers.
\newblock Llm-enhanced symbolic control for safety-critical applications.
\newblock \emph{arXiv preprint arXiv:2505.11077}, 2025.

\bibitem[Brunke et~al.(2025)Brunke, Zhang, R{\"o}mer, Naimer, Staykov, Zhou, and Schoellig]{brunke2025semantically}
Lukas Brunke, Yanni Zhang, Ralf R{\"o}mer, Jack Naimer, Nikola Staykov, Siqi Zhou, and Angela~P Schoellig.
\newblock Semantically safe robot manipulation: From semantic scene understanding to motion safeguards.
\newblock \emph{IEEE Robotics and Automation Letters}, 2025.

\bibitem[Castaneda et~al.(2021)Castaneda, Choi, Zhang, Tomlin, and Sreenath]{castaneda2021pointwise}
Fernando Castaneda, Jason~J Choi, Bike Zhang, Claire~J Tomlin, and Koushil Sreenath.
\newblock Pointwise feasibility of gaussian process-based safety-critical control under model uncertainty.
\newblock In \emph{2021 60th IEEE Conference on Decision and Control (CDC)}, pages 6762--6769. IEEE, 2021.

\bibitem[Chapman et~al.(2019)Chapman, Lacotte, Tamar, Lee, Smith, Cheng, Fisac, Jha, Pavone, and Tomlin]{chapman2019risk}
Margaret~P Chapman, Jonathan Lacotte, Aviv Tamar, Donggun Lee, Kevin~M Smith, Victoria Cheng, Jaime~F Fisac, Susmit Jha, Marco Pavone, and Claire~J Tomlin.
\newblock A risk-sensitive finite-time reachability approach for safety of stochastic dynamic systems.
\newblock In \emph{2019 American Control Conference (ACC)}, pages 2958--2963. IEEE, 2019.

\bibitem[Chern et~al.(2021)Chern, Wang, Iyer, and Nakahira]{chern2021safe}
Albert Chern, Xiang Wang, Abhiram Iyer, and Yorie Nakahira.
\newblock Safe control in the presence of stochastic uncertainties.
\newblock In \emph{2021 60th IEEE Conference on Decision and Control (CDC)}, pages 6640--6645. IEEE, 2021.

\bibitem[Cohen and Belta(2022)]{cohen2022high}
Max~H Cohen and Calin Belta.
\newblock High order robust adaptive control barrier functions and exponentially stabilizing adaptive control lyapunov functions.
\newblock In \emph{2022 American Control Conference (ACC)}, pages 2233--2238. IEEE, 2022.

\bibitem[Cosner et~al.(2021)Cosner, Singletary, Taylor, Molnar, Bouman, and Ames]{cosner2021measurement}
Ryan~K Cosner, Andrew~W Singletary, Andrew~J Taylor, Tamas~G Molnar, Katherine~L Bouman, and Aaron~D Ames.
\newblock Measurement-robust control barrier functions: Certainty in safety with uncertainty in state.
\newblock In \emph{2021 IEEE/RSJ International Conference on Intelligent Robots and Systems (IROS)}, pages 6286--6291. IEEE, 2021.

\bibitem[Deng et~al.(2024)Deng, Luu, Van~Ho, and Nakahira]{deng2024context}
Xiyu Deng, Quan~Khanh Luu, Anh Van~Ho, and Yorie Nakahira.
\newblock Context-aware llm-based safe control against latent risks.
\newblock \emph{arXiv preprint arXiv:2403.11863}, 2024.

\bibitem[Enisz et~al.(2015)Enisz, Szalay, Kohlrusz, and Fodor]{Enisz2015}
Krisztian Enisz, Istvan Szalay, Gabor Kohlrusz, and Denes Fodor.
\newblock Tyre–road friction coefficient estimation based on the discrete-time extended kalman filter.
\newblock \emph{Proceedings of the Institution of Mechanical Engineers, Part D: Journal of Automobile Engineering}, 229\penalty0 (9):\penalty0 1158--1168, 2015.
\newblock \doi{10.1177/0954407014556115}.

\bibitem[Fan et~al.(2020)Fan, Nguyen, Thakker, Alatur, Agha-mohammadi, and Theodorou]{fan2020bayesian}
David~D Fan, Jennifer Nguyen, Rohan Thakker, Nikhilesh Alatur, Ali-akbar Agha-mohammadi, and Evangelos~A Theodorou.
\newblock Bayesian learning-based adaptive control for safety critical systems.
\newblock In \emph{2020 IEEE international conference on robotics and automation (ICRA)}, pages 4093--4099. IEEE, 2020.

\bibitem[Farina et~al.(2016)Farina, Giulioni, and Scattolini]{farina2016stochastic}
Marcello Farina, Luca Giulioni, and Riccardo Scattolini.
\newblock Stochastic linear model predictive control with chance constraints--a review.
\newblock \emph{Journal of Process Control}, 44:\penalty0 53--67, 2016.

\bibitem[Fisac et~al.(2018)Fisac, Akametalu, Zeilinger, Kaynama, Gillula, and Tomlin]{fisac2018general}
Jaime~F Fisac, Anayo~K Akametalu, Melanie~N Zeilinger, Shahab Kaynama, Jeremy Gillula, and Claire~J Tomlin.
\newblock A general safety framework for learning-based control in uncertain robotic systems.
\newblock \emph{IEEE Transactions on Automatic Control}, 64\penalty0 (7):\penalty0 2737--2752, 2018.

\bibitem[Flanders(1973)]{flanders1973differentiation}
Harley Flanders.
\newblock Differentiation under the integral sign.
\newblock \emph{The American Mathematical Monthly}, 80\penalty0 (6):\penalty0 615--627, 1973.

\bibitem[Gal and Ghahramani(2016)]{gal2016dropout}
Yarin Gal and Zoubin Ghahramani.
\newblock Dropout as a bayesian approximation: Representing model uncertainty in deep learning.
\newblock In \emph{international conference on machine learning}, pages 1050--1059. PMLR, 2016.

\bibitem[Hafez et~al.(2025)Hafez, Akhormeh, Hegazy, and Alanwar]{hafez2025safe}
Ahmad Hafez, Alireza~Naderi Akhormeh, Amr Hegazy, and Amr Alanwar.
\newblock Safe llm-controlled robots with formal guarantees via reachability analysis.
\newblock \emph{arXiv preprint arXiv:2503.03911}, 2025.

\bibitem[Hafner et~al.(2018)Hafner, Tran, Irpan, Lillicrap, and Davidson]{hafner2018reliable}
Danijar Hafner, Dustin Tran, Alex Irpan, Timothy Lillicrap, and James Davidson.
\newblock Reliable uncertainty estimates in deep neural networks using noise contrastive priors.
\newblock \emph{stat}, 1050:\penalty0 24, 2018.

\bibitem[Heirung et~al.(2018)Heirung, Paulson, O’Leary, and Mesbah]{heirung2018stochastic}
Tor Aksel~N Heirung, Joel~A Paulson, Jared O’Leary, and Ali Mesbah.
\newblock Stochastic model predictive control—how does it work?
\newblock \emph{Computers \& Chemical Engineering}, 114:\penalty0 158--170, 2018.

\bibitem[Hern{\'a}ndez-Lerma and Lasserre(2012)]{hernandez2012discrete}
On{\'e}simo Hern{\'a}ndez-Lerma and Jean~B Lasserre.
\newblock \emph{Discrete-time Markov control processes: basic optimality criteria}, volume~30.
\newblock Springer Science \& Business Media, 2012.

\bibitem[Hoshino and Nakahira(2024)]{hoshino2024physics}
Hikaru Hoshino and Yorie Nakahira.
\newblock Physics-informed {RL} for maximal safety probability estimation.
\newblock In \emph{2024 American Control Conference (ACC)}, pages 3576--3583, 2024.
\newblock \doi{10.23919/ACC60939.2024.10644621}.

\bibitem[Hu et~al.(2025)Hu, Robey, and Liu]{hu2025steering}
Hanjiang Hu, Alexander Robey, and Changliu Liu.
\newblock Steering dialogue dynamics for robustness against multi-turn jailbreaking attacks.
\newblock \emph{arXiv preprint arXiv:2503.00187}, 2025.

\bibitem[Huang et~al.(2017)Huang, Liang, and Chen]{Huang2017}
Yiwen Huang, Wei Liang, and Yan Chen.
\newblock Estimation and analysis of vehicle lateral stability region.
\newblock In \emph{2017 American Control Conference (ACC)}, pages 4303--4308, 2017.
\newblock \doi{10.23919/ACC.2017.7963617}.

\bibitem[Huang et~al.(2021)Huang, Yong, and Chen]{Huang2021}
Yiwen Huang, Sze~Zheng Yong, and Yan Chen.
\newblock Stability control of autonomous ground vehicles using control-dependent barrier functions.
\newblock \emph{IEEE Transactions on Intelligent Vehicles}, 6\penalty0 (4):\penalty0 699--710, 2021.
\newblock \doi{10.1109/TIV.2021.3058064}.

\bibitem[Jankovic(2018)]{jankovic2018robust}
Mrdjan Jankovic.
\newblock Robust control barrier functions for constrained stabilization of nonlinear systems.
\newblock \emph{Automatica}, 96:\penalty0 359--367, 2018.

\bibitem[Kariotoglou et~al.(2013)Kariotoglou, Summers, Summers, Kamgarpour, and Lygeros]{kariotoglou2013approximate}
Nikolaos Kariotoglou, Sean Summers, Tyler Summers, Maryam Kamgarpour, and John Lygeros.
\newblock Approximate dynamic programming for stochastic reachability.
\newblock In \emph{2013 European Control Conference (ECC)}, pages 584--589. IEEE, 2013.

\bibitem[Khan et~al.(2025)Khan, Andrev, Murtaza, Aguilera, Zhang, Ding, Hutchinson, and Anwar]{khan2025safety}
Azal~Ahmad Khan, Michael Andrev, Muhammad~Ali Murtaza, Sergio Aguilera, Rui Zhang, Jie Ding, Seth Hutchinson, and Ali Anwar.
\newblock Safety aware task planning via large language models in robotics.
\newblock \emph{arXiv preprint arXiv:2503.15707}, 2025.

\bibitem[Li et~al.(2023)Li, Wang, Cai, Hu, and Chen]{li2023}
Zihan Li, Ping Wang, Shuo Cai, Xiao Hu, and Hong Chen.
\newblock {NMPC}-based controller for vehicle longitudinal and lateral stability enhancement under extreme driving conditions.
\newblock \emph{ISA Transactions}, 135:\penalty0 509--523, 2023.
\newblock ISSN 0019-0578.
\newblock \doi{https://doi.org/10.1016/j.isatra.2022.09.038}.

\bibitem[Long et~al.(2022)Long, Dhiman, Leok, Cort{\'e}s, and Atanasov]{long2022safe}
Kehan Long, Vikas Dhiman, Melvin Leok, Jorge Cort{\'e}s, and Nikolay Atanasov.
\newblock Safe control synthesis with uncertain dynamics and constraints.
\newblock \emph{IEEE Robotics and Automation Letters}, 7\penalty0 (3):\penalty0 7295--7302, 2022.

\bibitem[Lopez et~al.(2020)Lopez, Slotine, and How]{lopez2020robust}
Brett~T Lopez, Jean-Jacques~E Slotine, and Jonathan~P How.
\newblock Robust adaptive control barrier functions: An adaptive and data-driven approach to safety.
\newblock \emph{IEEE Control Systems Letters}, 5\penalty0 (3):\penalty0 1031--1036, 2020.

\bibitem[Luo et~al.(2020)Luo, Sun, and Kapoor]{luo2020multi}
Wenhao Luo, Wen Sun, and Ashish Kapoor.
\newblock Multi-robot collision avoidance under uncertainty with probabilistic safety barrier certificates.
\newblock \emph{Advances in Neural Information Processing Systems}, 33:\penalty0 372--383, 2020.

\bibitem[Lyu et~al.(2021)Lyu, Luo, and Dolan]{Lyu2021}
Yiwei Lyu, Wenhao Luo, and John~M. Dolan.
\newblock Probabilistic safety-assured adaptive merging control for autonomous vehicles.
\newblock In \emph{2021 IEEE International Conference on Robotics and Automation (ICRA)}, pages 10764--10770, 2021.
\newblock \doi{10.1109/ICRA48506.2021.9561894}.

\bibitem[Maher(2025)]{maher2025llmpc}
Gabriel Maher.
\newblock Llmpc: Large language model predictive control.
\newblock \emph{arXiv preprint arXiv:2501.02486}, 2025.

\bibitem[Mesbah(2016)]{mesbah2016stochastic}
Ali Mesbah.
\newblock Stochastic model predictive control: An overview and perspectives for future research.
\newblock \emph{IEEE Control Systems Magazine}, 36\penalty0 (6):\penalty0 30--44, 2016.

\bibitem[Miyaoka et~al.(2024)Miyaoka, Inoue, and Nii]{miyaoka2024chatmpc}
Yuya Miyaoka, Masaki Inoue, and Tomotaka Nii.
\newblock Chatmpc: Natural language based mpc personalization.
\newblock In \emph{2024 American Control Conference (ACC)}, pages 3598--3603. IEEE, 2024.

\bibitem[Murphy(2007)]{murphy2007}
Kevin~P. Murphy.
\newblock Conjugate bayesian analysis of the gaussian distribution.
\newblock University of British Columbia, 2007.

\bibitem[Nguyen and Sreenath(2016)]{nguyen2016exponential}
Quan Nguyen and Koushil Sreenath.
\newblock Exponential control barrier functions for enforcing high relative-degree safety-critical constraints.
\newblock In \emph{2016 American Control Conference (ACC)}, pages 322--328. IEEE, 2016.

\bibitem[Pan et~al.(2017)Pan, Yan, Theodorou, and Boots]{pan2017prediction}
Yunpeng Pan, Xinyan Yan, Evangelos~A Theodorou, and Byron Boots.
\newblock Prediction under uncertainty in sparse spectrum gaussian processes with applications to filtering and control.
\newblock In \emph{International Conference on Machine Learning}, pages 2760--2768. PMLR, 2017.

\bibitem[Prajna et~al.(2007)Prajna, Jadbabaie, and Pappas]{Prajna2007}
Stephen Prajna, Ali Jadbabaie, and George~J. Pappas.
\newblock A framework for worst-case and stochastic safety verification using barrier certificates.
\newblock \emph{IEEE Transactions on Automatic Control}, 52\penalty0 (8):\penalty0 1415--1428, 2007.
\newblock \doi{10.1109/TAC.2007.902736}.

\bibitem[Rajamani(2012)]{rajamani2012}
Rajesh Rajamani.
\newblock \emph{Vehicle Dynamics and Control}.
\newblock Springer, 2nd edition, 2012.

\bibitem[Ray(1997)]{Ray1997}
Laura~R. Ray.
\newblock Nonlinear tire force estimation and road friction identification: Simulation and experiments.
\newblock \emph{Automatica}, 33\penalty0 (10):\penalty0 1819--1833, 1997.
\newblock ISSN 0005-1098.
\newblock \doi{https://doi.org/10.1016/S0005-1098(97)00093-9}.

\bibitem[Rossetter and Gerdes(2006)]{rossetter2006}
Eric~J. Rossetter and J.~Christian Gerdes.
\newblock {Lyapunov Based Performance Guarantees for the Potential Field Lane-keeping Assistance System}.
\newblock \emph{Journal of Dynamic Systems, Measurement, and Control}, 128\penalty0 (3):\penalty0 510--522, 08 2006.
\newblock \doi{10.1115/1.2192835}.

\bibitem[Sanyal and Roy(2025)]{sanyal2025asma}
Sourav Sanyal and Kaushik Roy.
\newblock Asma: An adaptive safety margin algorithm for vision-language drone navigation via scene-aware control barrier functions.
\newblock \emph{IEEE Robotics and Automation Letters}, 2025.

\bibitem[Shahriari et~al.(2015)Shahriari, Swersky, Wang, Adams, and De~Freitas]{shahriari2015taking}
Bobak Shahriari, Kevin Swersky, Ziyu Wang, Ryan~P Adams, and Nando De~Freitas.
\newblock Taking the human out of the loop: A review of bayesian optimization.
\newblock \emph{Proceedings of the IEEE}, 104\penalty0 (1):\penalty0 148--175, 2015.

\bibitem[Taylor and Ames(2020)]{taylor2020adaptive}
Andrew~J Taylor and Aaron~D Ames.
\newblock Adaptive safety with control barrier functions.
\newblock In \emph{2020 American Control Conference (ACC)}, pages 1399--1405. IEEE, 2020.

\bibitem[Vasileva et~al.(2020)Vasileva, Shmarov, and Zuliani]{vasileva2020probabilistic}
Mariia Vasileva, Fedor Shmarov, and Paolo Zuliani.
\newblock Probabilistic reachability for uncertain stochastic hybrid systems via gaussian processes.
\newblock In \emph{2020 18th ACM-IEEE International Conference on Formal Methods and Models for System Design (MEMOCODE)}, pages 1--11. IEEE, 2020.

\bibitem[Wang et~al.(2022{\natexlab{a}})Wang, Liu, and Chen]{YW2022}
Yanze Wang, Mingzhe Liu, and Yan Chen.
\newblock Safety-guaranteed control of coupled longitudinal and lateral vehicle dynamics by considering tire stability via cdbfs.
\newblock \emph{IFAC-PapersOnLine}, 55\penalty0 (37):\penalty0 500--505, 2022{\natexlab{a}}.
\newblock \doi{https://doi.org/10.1016/j.ifacol.2022.11.232}.
\newblock 2nd Modeling, Estimation and Control Conference MECC 2022.

\bibitem[Wang and Nakahira(2023)]{wang2023generalizable}
Zhuoyuan Wang and Yorie Nakahira.
\newblock A generalizable physics-informed learning framework for risk probability estimation.
\newblock In \emph{Learning for Dynamics and Control Conference}, pages 358--370. PMLR, 2023.

\bibitem[Wang et~al.(2022{\natexlab{b}})Wang, Jing, Kurniawan, Chern, and Nakahira]{wang2022myopically}
Zhuoyuan Wang, Haoming Jing, Christian Kurniawan, Albert Chern, and Yorie Nakahira.
\newblock Myopically verifiable probabilistic certificate for long-term safety.
\newblock In \emph{2022 American Control Conference (ACC)}, pages 4894--4900. IEEE, 2022{\natexlab{b}}.

\bibitem[Wang et~al.(2025)Wang, Chern, and Nakahira]{wang2025generalizable}
Zhuoyuan Wang, Albert Chern, and Yorie Nakahira.
\newblock Generalizable physics-informed learning for stochastic safety-critical systems.
\newblock \emph{IEEE Transactions on Automatic Control}, 2025.

\bibitem[Wang et~al.(2026)Wang, Jing, Kurniawan, Chern, and Nakahira]{wang2026myopically}
Zhuoyuan Wang, Haoming Jing, Christian Kurniawan, Albert Chern, and Yorie Nakahira.
\newblock Myopically verifiable probabilistic certificates for safe control and learning.
\newblock \emph{IEEE Transactions on Automatic Control}, 2026.

\bibitem[Wu et~al.(2025)Wu, Xiong, Hu, Iyengar, Jiang, Bera, Tan, and Jagannathan]{wu2025selp}
Yi~Wu, Zikang Xiong, Yiran Hu, Shreyash~S Iyengar, Nan Jiang, Aniket Bera, Lin Tan, and Suresh Jagannathan.
\newblock Selp: Generating safe and efficient task plans for robot agents with large language models.
\newblock In \emph{2025 IEEE International Conference on Robotics and Automation (ICRA)}, pages 2599--2605. IEEE, 2025.

\bibitem[Xiao and Belta(2021)]{xiao2021high}
Wei Xiao and Calin Belta.
\newblock High-order control barrier functions.
\newblock \emph{IEEE Transactions on Automatic Control}, 67\penalty0 (7):\penalty0 3655--3662, 2021.

\bibitem[Xiao et~al.(2021)Xiao, Belta, and Cassandras]{xiao2021adaptive}
Wei Xiao, Calin Belta, and Christos~G Cassandras.
\newblock Adaptive control barrier functions.
\newblock \emph{IEEE Transactions on Automatic Control}, 67\penalty0 (5):\penalty0 2267--2281, 2021.

\bibitem[Yi et~al.(2024)Yi, Ouyang, Xu, Liu, Liao, Luo, and Shen]{yi2024survey}
Zihao Yi, Jiarui Ouyang, Zhe Xu, Yuwen Liu, Tianhao Liao, Haohao Luo, and Ying Shen.
\newblock A survey on recent advances in llm-based multi-turn dialogue systems.
\newblock \emph{ACM Computing Surveys}, 2024.

\bibitem[Zhang et~al.(2025)Zhang, Dai, Wu, Yang, Wang, Tang, and Liu]{zhang2025survey}
Chen Zhang, Xinyi Dai, Yaxiong Wu, Qu~Yang, Yasheng Wang, Ruiming Tang, and Yong Liu.
\newblock A survey on multi-turn interaction capabilities of large language models.
\newblock \emph{arXiv preprint arXiv:2501.09959}, 2025.

\end{thebibliography}

\appendix

\section{Proofs}
\label{sec:appendix_proof}

\begin{proof} (Theorem~\ref{thm:main_theorem_aug})
We use mathematical induction to prove \eqref{eq:satisfy_control_policy_aug}. 
Condition \eqref{eq:satisfy_control_policy_aug} holds for $k=0$ due to the assumption on initial condition. Next, we suppose \eqref{eq:satisfy_control_policy_aug} holds at time $k\geq0$, and show \eqref{eq:satisfy_control_policy_aug} holds for time $k+1$. Let
\begin{align}
\label{eq:greater_than_q}
    \mE_{\pi_\mathrm{safe}}\left[\sprob^\pi_{H_k}(X_k)  \right]= 1 - \epsilon + q
\end{align}
for some $q\geq 0$. Note that the expectation in~\eqref{eq:greater_than_q} is taken over all possible trajectories starting from $X_0$. For the rest of the proof, we omit $\pi_\mathrm{safe}$ in the expectation for notation conciseness.
We first consider the set of events $\mathcal{D}_0 $ that satisfies $\sprob^\pi_{H_{k}}(\state_k) < 1-\epsilon$ and $\mathcal{D}_1$ that satisfies $\sprob^\pi_{H_{k}}(\state_k) \geq 1-\epsilon $. 
Then, define variables $v_i$ and $\delta_i$, for $i\in\{0,1\}$, as follows:
\begin{align}
    \label{eq:V0}
    v_0 & = \mE \left[\sprob^\pi_{H_{k}}(\state_k) \mid \mathcal{D}_0, \pi_\mathrm{safe} \right] = 1-\epsilon-\delta_0, \\
    \label{eq:V1}
    v_1 & = \mE \left[\sprob^\pi_{H_{k}}(\state_k) \mid \mathcal{D}_1, \pi_\mathrm{safe} \right] = 1-\epsilon+\delta_1, 
\end{align}
where $\delta_0 \geq 0$ and $\delta_1 \geq 0$, since $\mE \left[\sprob^\pi_{H_{k}}(\state_k) \mid \mathcal{D}_0 \right] < 1-\epsilon$ and $
\mE \left[\sprob^\pi_{H_{k}}(\state_k) \mid \mathcal{D}_1\right] \geq 1-\epsilon$ hold. 
The left hand side of \eqref{eq:greater_than_q} can then be written as
\begin{equation}
\begin{aligned}
    \mE[\sprob^\pi_{H_{k}}(\state_k)]
    & = \mE \left[\sprob^\pi_{H_{k}}(\state_k) \mid \mathcal{D}_0\right] \mP(\mathcal{D}_0) + \mE \left[\sprob^\pi_{H_{k}}(\state_k) \mid \mathcal{D}_1 \right] \mP(\mathcal{D}_1) \\
    & = v_0q_0 + v_1q_1, 
\label{eq:expectation_in_VP}
\end{aligned}
\end{equation}
where $q_i := \mP(\mathcal{D}_i)$, for $i \in \{0, 1\}$, and we have
\begin{align}
\label{eq:fact2}
    \mP(\mathcal{D}_0) + \mP(\mathcal{D}_1) = q_0 + q_1 = 1.
\end{align}
Combining \eqref{eq:greater_than_q} and \eqref{eq:expectation_in_VP}, and applying~\eqref{eq:V0} and~\eqref{eq:V1} gives
\begin{align}
\label{eq:combine_expectation}
    1-\epsilon+q & = v_0 q_0 + v_1 q_1  \\
                 & =\left(1-\epsilon-\delta_0 \right)q_0 + \left(1-\epsilon+\delta_1 \right)q_1. 
\end{align}
This combined with \eqref{eq:fact2} yields
\begin{align}
\label{eq:fact3}
    q=\delta_1 q_1 - \delta_0 q_0.
\end{align}
On the other hand, we have
\begin{align}
    & \mE \left[\gamma\left(\sprob^\pi_{H_{k}}(\state_k) - (1-\epsilon) \right) \right] \label{eq:expectation_derivation_start} \\
    = \; & \mP(\mathcal{D}_0)  \mE \left[\gamma\left(\sprob^\pi_{H_{k}}(\state_k) - (1-\epsilon) \right) \mid \mathcal{D}_0 \right]   + \mP(\mathcal{D}_1)  \mE \left[\gamma\left(\sprob^\pi_{H_{k}}(\state_k) - (1-\epsilon) \right) \mid \mathcal{D}_1 \right] \\
    =  \; & q_0 \left( \mE \left[\gamma\left(\sprob^\pi_{H_{k}}(\state_k) - (1-\epsilon) \right) \mid \mathcal{D}_0 \right] \right)  + q_1 \left( \mE \left[\gamma\left(\sprob^\pi_{H_{k}}(\state_k) - (1-\epsilon) \right) \mid \mathcal{D}_1 \right] \right) \label{eq:expectation_given_q}  \\
    \leq \; & q_0 \left( \gamma\left(\mE \left[\sprob^\pi_{H_{k}}(\state_k) - (1-\epsilon) \mid \mathcal{D}_0 \right] \right) \right) + q_1 \left( \gamma\left( \mE \left[\sprob^\pi_{H_{k}}(\state_k) - (1-\epsilon) \mid \mathcal{D}_1 \right] \right) \right) \label{eq:inequality_from_jensen_rule} \\
    = \; & q_0 \left( \gamma\left(-\delta_0 \right) \right) + q_1 \left( \gamma\left(\delta_1 \right) \right) \label{eq:jense_inequality_given_V0V1} \\
    \leq \; & \gamma \left(-q_0\delta_0 + q_1\delta_1 \right) \label{eq:jensen_inequality_given_V0V1_assume_A2}  \\
    = \; & \gamma(q) \label{eq:expectation_lessthan_0}  \\
    \leq \; & q. \label{eq:expectation_derivation_end} 
\end{align}
Here, \eqref{eq:inequality_from_jensen_rule} is obtained from Jensen's inequality for concave function $\gamma$; \eqref{eq:jense_inequality_given_V0V1} is based on \eqref{eq:V0} and \eqref{eq:V1}; \eqref{eq:jensen_inequality_given_V0V1_assume_A2} is due to Jensen's inequality, design requirement 1 and \eqref{eq:fact2}; and \eqref{eq:expectation_lessthan_0} is due to \eqref{eq:fact3}. From \eqref{eq:expectation_derivation_start} to \eqref{eq:expectation_derivation_end}, we have
\begin{align}
\label{eq:less_than_minus_q}
    \mE \left[-\gamma\left(\sprob^\pi_{H_{k}}(\state_k) - (1-\epsilon) \right) \right]\geq -q.
\end{align}
Recall that the control action is chosen to satisfy \eqref{eq:safety_condition_each_Zt_aug}. 
Now, we take the expectation over both side of \eqref{eq:safety_condition_each_Zt_aug} to obtain
\begin{align}
\label{eq:expectation_over_condition}
    &\mathbb{E}[\mathcal{A}^{U_k}_{H_{k+1}} \Psi^\pi_{H_k} (\state_k) ]   
    \geq \mathbb{E}[ -\gamma(\sprob^\pi_{H_{k}}(\state_k)-(1-\epsilon))].
\end{align}
From the definition \eqref{eq:generator_A} of the discrete-time generator, we have
\begin{equation}
\begin{aligned}
    & \mathbb{E}\left[ \dfrac{\mathbb{E}[ \Psi^\pi_{H_{k+1}}(\state_{k+1})  \mid \state_k, U_k] - \Psi^\pi_{H_{k}}(\state_k) }{\Delta t}  \right]
    \geq
    \mathbb{E}[-\gamma(\sprob^\pi_{H_{k}}(\state_k)-(1-\epsilon)) ].
\end{aligned}
\end{equation}
Using the law of total expectation, we have
\begin{align}
\label{eq:after_law_of_total_expectation}
    & \frac{\mathbb{E}[\sprob^\pi_{H_{k+1}}(\state_{k+1})-\sprob^\pi_{H_{k}}(\state_k) ]}{\Delta t}
    \geq 
    \mathbb{E}[-\gamma(\sprob^\pi_{H_{k}}(\state_k)-(1-\epsilon)) ].
\end{align}
Combining \eqref{eq:greater_than_q}, \eqref{eq:less_than_minus_q}, \eqref{eq:after_law_of_total_expectation} and design requirement 2 yields
\begin{align}
\label{eq:eps_q_deltat_c_start}
    & \mathbb{E}[\sprob^\pi_{H_{k+1}}(\state_{k+1})  ] \\
    \geq \;& \mathbb{E}[\sprob^\pi_{H_{k}}(\state_k) ]  
    + \mathbb{E}[-\gamma(\sprob^\pi_{H_{k}}(\state_k)-(1-\epsilon)) ]\Delta t\\
    \geq \;& 1-\epsilon+q-q\Delta t\\
    = \; &1-\epsilon+q(1-\Delta t).
\end{align}
Since $\Delta t\ll 1$ and $q\geq 0$, we have
\begin{align}
\label{eq:safety_condition_proof}
    \mathbb{E}[\sprob^\pi_{H_{k+1}}(\state_{k+1})] \geq 1-\epsilon.
\end{align}
Here, the left hand side of~\eqref{eq:safety_condition_proof} is equivalent to the left hand side of~\eqref{eq:satisfy_control_policy_aug}.
\end{proof}

\section{Implementation of Safety Constraints}
\label{sec:appendix_psc_calculation}

In this section, we discuss a numerical method for calculating the discrete time generator $\mathcal{A}^{U_k}_{H_{k+1}}$
$\Psi^{\pi}_{H_k} (X_k)$ in~\eqref{eq:generator_A}. 
Given $H_{k+1}$ and $H_k$, the value of the discrete time generator can be calculated as 
\begin{align}
 & \mathcal{A}^{U_k}_{H_{k+1}} \Psi^{\pi}_{H_k} (X_k) = 
 \dfrac{ \mathbb{E}[\Psi^\pi_{H_{k+1}}(X_{k+1}) | X_k, U_k] - \Psi^\pi_{H_k}(X_k) }{\Delta t} \notag \\
 = \; & 
 \underbrace{ 
 \dfrac{ \mathbb{E}[\Psi^\pi_{H_{k+1}}(X_{k+1}) | X_k, U_k] - \Psi^\pi_{H_{k+1}}(X_k) }{\Delta t} 
 }_{ := \mathcal{S}^{U_k}\Psi^\pi_{H_{k+1}}(X_k) } + \underbrace{ \dfrac{ \Psi^\pi_{H_{k+1}}(X_k) - \Psi^\pi_{H_k}(X_k) }{\Delta t}
 }_{ := \mathcal{T}_{H_{k+1}}\Psi^\pi_{H_k}(X_k)}
 , 
  \label{eq:generator_decomposition}
\end{align}
where on the right hand side the first term $\mathcal{S}^{U_k} \Psi^\pi_{H_{k+1}}(X_{k})$ involves prediction of the state $X_{k+1}$ at the next time step, and the second term $\mathcal{T}_{H_{k+1}}\Psi^\pi_{H_k}(X_k)$ involves the update of the density function from $H_k$ to $H_{k+1}$. 
Here, we require $H_{k+1}$ at the time step $k$ to calculate these terms. 
The function $H_k$ is supposed to be constructed from the information obtained up to the time step $k-1$. 
Then we know $H_{k+1}$ is constructed at the time step $k$, and we can directly evaluate the second term $\mathcal{T}_{H_{k+1}}\Psi^\pi_{H_k}(X_k)$ when we have estimates for $\Psi^\pi_{H_{k+1}}(X_k)$ and $\Psi^\pi_{H_k}(X_k)$.  
Furthermore, to approximate the first term $\mathcal{S}^{U_k} \Psi^\pi_{H_{k+1}}(X_{k})$, we suppose the dynamics \eqref{eq:discrete_dynamics} are given as a discretization of continuous-time dynamics 
\begin{align}
\label{eq:continuous_dynamics}
     d X_t &= f_c(X_t, U_t, \xi) dt + \sigma_c(X_t,U_t) dW_t, \; \forall t\in\R^+,
\end{align}
where $X_0 = x_0$ and we denote the state by $X_t$ and the input by $U_t$ with slight abuse of notation.
We then have the following theorem. 

\begin{assumption}
\label{asp:strong_sol_SDE}
    The continuous dynamics function $f_c$ and $\sigma_c$ in~\eqref{eq:continuous_dynamics} are twice differentiable with respect to $X_t$. The control space $\mathcal{U}$ is compact.
\end{assumption}

\begin{assumption}
\label{asp:differentiable_safe_prob}
    The safety probability $\Psi_{H}^{\pi}(X)$ is differentiable with regard to $X$ for any $H$ and $\pi$, and has bounded derivatives.
\end{assumption}

\begin{theorem}
\label{thm:infinitesimal_calculation}
Suppose that Assumption~\ref{asp:strong_sol_SDE} and~\ref{asp:differentiable_safe_prob} hold.
Then, the following equality holds: 
\begin{align}
&\lim_{\Delta t \to 0} \mathcal{S}^{U_k} \Psi^\pi_{H_{k+1}}(X_{k}) 
 \notag \\ 
 = \; &  \nabla_x \Psi^\pi_{H_{k+1}}(\state_k) \mathbb{E} \left[ f_c (\state_k, U_k, \Xi_{k+1}) \mid X_k, U_k \right] + \frac{1}{2}\sigma_c(X_k,U_k)^\top \sigma_c(X_k,U_k)\nabla^2_x \Psi^\pi_{H_{k+1}}(\state_k),
\end{align}
where $\nabla_x$ and $\nabla^2_x$ are the gradient and Hessian operators. 
\end{theorem}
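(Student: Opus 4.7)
The plan is to recognize the claimed limit as the classical Itô infinitesimal generator of the continuous-time SDE~\eqref{eq:continuous_dynamics} applied to $\Psi^\pi_{H_{k+1}}$, with the drift averaged over the parameter distribution $H_{k+1}$. Since the discrete dynamics~\eqref{eq:discrete_dynamics} arise as the time-$\Delta t$ discretization of~\eqref{eq:continuous_dynamics} with the control held constant at $U_k$ on $[t_k, t_k+\Delta t]$, the proof reduces to applying Itô's formula on that interval for a fixed parameter realization $\xi$ and then averaging over $\Xi_{k+1} \sim H_{k+1}$.

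First, I would fix a realization $\xi$ of $\Xi_{k+1}$ and apply Itô's formula to $\Psi^\pi_{H_{k+1}}(X_t)$ along the SDE $dX_t = f_c(X_t, U_k, \xi)\,dt + \sigma_c(X_t, U_k)\,dW_t$ with $X_{t_k} = X_k$. Assumption~\ref{asp:strong_sol_SDE} (twice differentiability of $f_c$ and $\sigma_c$ in $X$ together with compactness of $\mathcal{U}$) guarantees the existence of a strong solution and the validity of Itô's formula, while Assumption~\ref{asp:differentiable_safe_prob} (bounded derivatives of $\Psi^\pi_{H_{k+1}}$) ensures the resulting integrands are integrable. Taking the conditional expectation given $X_k$, $U_k$, $\xi$ eliminates the martingale part; dividing by $\Delta t$ and sending $\Delta t \to 0$ then yields, by continuity in $t$ of the integrand, the pointwise-in-$\xi$ limit $\nabla_x \Psi^\pi_{H_{k+1}}(X_k)\,f_c(X_k, U_k, \xi) + \tfrac{1}{2}\sigma_c(X_k, U_k)^\top \sigma_c(X_k, U_k) \nabla^2_x \Psi^\pi_{H_{k+1}}(X_k)$.

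Second, I would integrate this identity against $H_{k+1}(\xi)\,d\xi$. Because $\sigma_c$ does not depend on $\xi$, only the drift term inherits the expectation, yielding $\nabla_x \Psi^\pi_{H_{k+1}}(X_k)\,\mathbb{E}[f_c(X_k, U_k, \Xi_{k+1}) \mid X_k, U_k]$ for the first term while leaving the diffusion contribution unchanged. Using Fubini's theorem on the joint Brownian/parameter probability space, the discrete-time ratio $\mathcal{S}^{U_k}\Psi^\pi_{H_{k+1}}(X_k)$ equals the $H_{k+1}$-expectation of the incremental quotient for each $\xi$, and a dominated convergence argument then allows the $\Delta t \to 0$ limit to be pulled inside this outer expectation, delivering the claimed identity.

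The main obstacle will be justifying this limit--expectation interchange with a dominating function integrable against $H_{k+1}$. The key ingredient is a uniform moment bound of the form $\mathbb{E}[\|X_{t_k+\Delta t}-X_k\|^2 \mid X_k, U_k, \xi] = O(\Delta t)$ that is uniform for $U_k \in \mathcal{U}$ and for $\xi$ in the essential support of $H_{k+1}$; this in turn requires $f_c(\cdot, U_k, \xi)$ to satisfy a linear growth condition uniformly in these parameters, a mild strengthening implicit in Assumption~\ref{asp:strong_sol_SDE} when combined with compactness of $\mathcal{U}$. Together with the bounded-derivative hypothesis on $\Psi^\pi_{H_{k+1}}$, this controls the Itô remainder uniformly in $\xi$ via standard Burkholder--Davis--Gundy estimates, closes the dominated convergence argument, and yields the claimed generator formula.
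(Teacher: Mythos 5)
Your proposal follows essentially the same route as the paper's proof: you condition on the parameter realization $\Xi_{k+1}$ (the paper does this via the law of total expectation in its Eq.~\eqref{eq:separate_expectation}), apply It\^{o}'s formula for fixed $\xi$ to obtain the pointwise generator, and then invoke dominated convergence to exchange the $\Delta t\to 0$ limit with the outer expectation over $\Xi_{k+1}$ before factoring out the $\xi$-independent diffusion term. The one place where you are more explicit than the paper is in flagging that the dominated convergence step really needs a dominating function integrable against $H_{k+1}$, which in turn requires a moment bound of the form $\mathbb{E}[\|X_{t_k+\Delta t}-X_k\|^2\mid X_k,U_k,\xi]=O(\Delta t)$ uniform over $U_k\in\mathcal{U}$ and $\xi$ in the support of $H_{k+1}$; the paper simply cites Assumption~\ref{asp:differentiable_safe_prob} and dominated convergence without spelling out the uniformity in $\xi$, so your observation is a useful refinement rather than a departure.
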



\begin{proof} 
(Theorem~\ref{thm:infinitesimal_calculation})
From the definition of $\mathcal{S}^{U_k} \Psi^\pi_{H_{k+1}} (\state_k)$ in~\eqref{eq:generator_decomposition} we get
\begin{align}
     & \lim_{\Delta t \to 0} \mathcal{S}^{U_k} \Psi^\pi_{H_{k+1}} (\state_k) \\
    = & \lim_{\Delta t \to 0} \dfrac{ \mathbb{E}[\Psi^\pi_{H_{k+1}}(X_{k+1}) | X_k, U_k] - \Psi^\pi_{H_{k+1}}(X_k) }{\Delta t} 
   \\ 
      = & \lim_{\Delta t \to 0}
    \mE \left[ \dfrac{\Psi^\pi_{H_{k+1}}(\state_{k+1})  - \Psi^\pi_{H_{k+1}}(\state_k) }{\Delta t} \middle| X_k, U_k\right] 
    \\ 
    = & \lim_{\Delta t \to 0} \mathbb{E} \Bigg[  \mathbb{E} \Bigg[ \dfrac{\Psi^\pi_{H_{k+1}}(\state_{k+1})  - \Psi^\pi_{H_{k+1}}(\state_k) }{\Delta t} \bigg| \state_k, U_k, \Xi_{k+1}\Bigg] \bigg| X_k, U_k \Bigg] \label{eq:separate_expectation} 
    \\ 
    = & \mathbb{E} \Bigg[ \lim_{\Delta t \to 0}  \mathbb{E} \Bigg[ \dfrac{\Psi^\pi_{H_{k+1}}(\state_{k+1})  - \Psi^\pi_{H_{k+1}}(\state_k) }{\Delta t}  \bigg| \state_k, U_k, \Xi_{k+1}\Bigg] \bigg| X_k, U_k \Bigg],
    \label{eq:move_expectation}
\end{align}
where \eqref{eq:separate_expectation} holds due to the law of total expectation, and \eqref{eq:move_expectation} holds due to Assumption~\ref{asp:differentiable_safe_prob} and the dominated convergence theorem~\citep{flanders1973differentiation}. 
Then we have
\begin{align}
    & \lim_{\Delta t \to 0} \mathcal{S}^{U_k} \Psi^\pi_{H_{k+1}} (\state_k) \\
     =\;& \mathbb{E} \Bigg[ \lim_{\Delta t \to 0}  \mathbb{E} \Bigg[ \dfrac{\Psi^\pi_{H_{k+1}}(\state_{k+1})  - \Psi^\pi_{H_{k+1}}(\state_k) }{\Delta t}  \mid \state_k, U_k,\Xi_{k+1} \Bigg] \bigg| X_k, U_k \Bigg]
     \\ 
     =\;& \mathbb{E} \bigg[ \nabla_x \Psi^\pi_{H_{k+1}}(\state_k) f_\mathrm{c} (\state_k, U_k,  \Xi_{k+1})
     + \frac{1}{2}\sigma_c(X_k, U_k)^\top \sigma_c(X_k, U_k) \nabla^2_x \Psi^\pi_{H_{k+1}}(\state_k) \bigg| X_k, U_k \bigg] \label{eq:Ito_lemma}
     \\ 
     =\;& \nabla_x \Psi^\pi_{H_{k+1}}(\state_k) \mathbb{E}\left[ f_c (\state_k, U_k, \Xi_{k+1}) \mid X_k, U_k \right] + \frac{1}{2}\sigma_c(X_k,U_k)^\top \sigma_c(X_k,U_k)\nabla^2_x \Psi^\pi_{H_{k+1}}(\state_k),
\end{align}
where~\eqref{eq:Ito_lemma} holds by applying Ito's lemma under Assumption~\ref{asp:strong_sol_SDE} and~\ref{asp:differentiable_safe_prob}.
\end{proof}

\begin{remark}
Evaluation of the discrete time generator $\mathcal{A}^{U_k}_{H_{k+1}} \Psi^{\pi}_{H_k} (X_k)$ in~\eqref{eq:generator_A} requires values of the long-term safety probability $\Psi^{\pi}_{H_k}(X_k)$ in~\eqref{eq:safety_prob}. In practice, one could run online parallel Monte Carlo simulations to empirically estimate such values, or leverage offline model-based and data-driven methods such as the ones in~\citet{chern2021safe, wang2025generalizable}.
\end{remark}

\section{Adaptive PSC Algorithm}
\label{sec:appendix_psc_alg}

In this section, we present the overall adaptive probabilistic safety certificate algorithm in Algorithm~\ref{alg:safe_control}. 
In line~\ref{ln:initialization}, we initialize the horizon of the problem $T_{\text{end}}$, the discrete time step $\Delta t$, the initial state $X_0$, the nominal control policy $\pi$, the long-term safety horizon $T$, the risk tolerance $\epsilon$, and the reference controller $N$. While time step $k$ has not reached $T_{\text{end}}$, we run the proposed adaptive safe control method. Specifically, we first obtain the available information $H_k$ from the estimator (line~\ref{ln:estimator}), calculate the safety probability $\Psi^\pi_{H_{k}}(X_{k})$ (line~\ref{ln:safe_prob}), solve optimization~\eqref{eq:conditioning} to get safe control $U_k^{\text{safe}}$ (line~\ref{ln:safe_control}), and step the dynamics to get state at the next time step $\state_{k+1}$ (line~\ref{ln:step_dynamics}). Note that the safety constraint for the optimization problem~\eqref{eq:conditioning} involves the calculation of the the infinitesimal generator $\mathcal{A}^{U_k}_{H_{k+1}} \Psi^{\pi}_{H_k} (X_k)$. In practice one can assume finite admissible control set for efficient computation of the problem.

\begin{algorithm}[H]
\caption{Adaptive Probabilistic Safety Certificate}\label{alg:safe_control}
\LinesNumbered
\SetAlgoLined

\KwIn{$T_{\text{end}}, \Delta t, X_0, \pi, T, \epsilon, N$}

$k \gets 0$\; \label{ln:initialization}

\While{$k < T_{\text{end}}$}{
    Obtain $H_k$ from the estimator\; \label{ln:estimator}
    
    Calculate safety probability $\Psi^\pi_{H_k}(X_k)$\; \label{ln:safe_prob}
    
    Solve~\eqref{eq:conditioning} to get safe control
    $U_k^{\text{safe}} \gets \pi_{\text{safe}}(X_k, \hat{\xi}_k)$\; \label{ln:safe_control}
    
    Step the dynamics~\eqref{eq:discrete_dynamics} with $U_k^{\text{safe}}$
    to get $X_{k+1}$\; \label{ln:step_dynamics}
    
    $k \gets k + 1$\;
}
\end{algorithm}

\section{Vehicle model}
\label{sec:appendix_vehicle_model}

In this section, we introduce the vehicle model used in numerical experiments. 
For the vehicle dynamics, a 3 degree-of-freedom (DoF) model is adopted, and the 
dynamics of the state $x_\mathrm{vehicle} := [v_x, v_y, r, \delta]$,  where $v_x$ and $v_y$ are the vehicle longitudinal and lateral velocities, respectively, $r$ is the yaw rate, and $\delta$ is the steering angle, is given by the following equations: 
\begin{align}
    m \dot{v}_x = & m v_y r 
      +(F_\text{Lfl}+ F_\text{Lfr})\cos\delta 
     \notag \\
      & -(F_\text{Sfl}+ F_\text{Sfr})\sin\delta 
      + F_\text{Lrl} + F_\text{Lrr}\\
    m \dot{v}_y = & -m v_y r 
      +(F_\text{Sfl}+ F_\text{Sfr})\cos\delta 
      \notag \\
      & +(F_\text{Lfl}+ F_\text{Lfr})\sin\delta 
      +  F_\text{Srl} + F_\text{Srr}\\
    I_z \dot{r}  = & 
      l_f \left\{ (F_\text{Sfl}+F_\text{Sfr})\cos\delta + (F_\text{Lfl}+F_\text{Lfr})\sin\delta\right\} 
      \notag \\
    & - l_r \left(F_\text{Srl} + F_\text{Srr} \right) 
     + \frac{W}{2} \left(-F_\text{Lrl} + F_\text{Lrr}\right) 
    \nonumber \\
    & + \frac{W}{2} \left\{ (-F_\text{Lfl}+F_\text{Lfr})\cos\delta +(F_\text{Sfl}-F_\text{Sfr})\sin\delta  \right\}
    \\
    \dot{\delta} = & \Delta \delta
\end{align}
where the parameters $m$ is the mass of the vehicle,  $I_z$ is the rotational inertia of the vehicle, $l_f$ is the front wheel distance to vehicle center, $l_r$ is the rear wheel distance to vehicle center, and $W$ is the width of the vehicle. The symbol $F_{ij}$ for $i \in \{\mathrm{L}, \mathrm{S}\}$ and $j \in \{ \mathrm{fl}, \,\mathrm{fr}, \,\mathrm{rl}, \,\mathrm{rr} \}$ stands for the longitudinal and side (lateral) tire forces for each of the four tires (fl:front left, fr:front right, rl:rear left, rr:rear right). 
Each tire force can be characterized as a function of the tire slip angle $\alpha_{ij}$ and slip ratio $\lambda_{ij}$ as follows 
\begin{align}
  \alpha_\mathrm{fl} = & \,\alpha_\mathrm{fr} = \delta - \dfrac{ v_y + l_\mathrm{f} r }{v_x}, \\
  \alpha_\mathrm{rl} = & \, \alpha_\mathrm{rr} =  - \dfrac{ v_y + l_\mathrm{r} r }{v_x},
  \\ 
  \lambda_{ij} = & \dfrac{R_\mathrm{e} \omega_{ij} - v_x }{ \max(R_\mathrm{e}\omega_{ij}, v_x ) } 
\end{align}
where $R_\mathrm{e}$ is the wheel radius. 
As the tire slip angle and the slip ratio increase, the tire forces saturate rapidly and transit from linear to nonlinear dynamics. 
To represent the nonlinear tire forces, following \cite{li2023}, we use the LuGre combined-slip tire model, which considers the internal tire characteristics and has been widely utilized to describe the nonlinearity under extreme conditions. 
The tire longitudinal and lateral force descriptions are as follows:
\begin{align}
    F_{\mathrm{L}ij} = & \left( \dfrac{\sigma_{0x}}{\frac{\sigma_{0x} \| v_{r, ij}\|}{ \mu g(v_{r, ij})  } + \kappa_x R_\mathrm{e} |\omega_{ij}| } + \sigma_{2x} \right) v_{rx, ij} F_z, \\
    F_{\mathrm{S}ij} = & \left( \dfrac{\sigma_{0y}}{\frac{\sigma_{0y} \| v_{r,ij}\|}{ \mu g(v_{r,ij})  } + \kappa_y R_\mathrm{e} |\omega_{ij}| } + \sigma_{2y} \right) v_{ry, ij} F_z, 
\end{align}
where $v_{rx,ij}$, $v_{ry,ij}$ and $\| v_{r,ij} \|$ are given by 
\begin{align}
 v_{rx,ij} =& R_\mathrm{e} \omega_{ij} - v_x, \\
 v_{ry, ij} =& v_x \alpha_{ij}, \\
 \| v_{r,ij} \| = & \sqrt{ v_{rx,ij}^2 + v_{ry, ij}^2}. 
\end{align}
In the above equations, $\mu$ stands for the road  friction coefficient, and $g(v_{r,ij})$ is a Stribeck function given by
\begin{align}
    g(v_{r,ij}) = \mu_\mathrm{c} + (\mu_\mathrm{s} - \mu_\mathrm{c} ) \exp \left(- \sqrt{ v_{r,ij}/V_s 
    } \right),
\end{align}
where $V_s$ is the Stribeck relative velocity. 
For the dynamics of the state $x_\mathrm{wheel}$, 
we assume an open drivetrain with equal torque distribution and linear dynamics, as shown below:
\begin{align}
    I_\omega \dot{\omega}_\text{fl} =& -R_\mathrm{e} F_\text{Lfl} + \frac{1}{4}\tau_\mathrm{e}, \\
    I_\omega \dot{\omega}_\text{fr} =& -R_\mathrm{e} F_\text{Lfr} + \frac{1}{4}\tau_\mathrm{e}, \\
    I_\omega \dot{\omega}_\text{rl} =& -R_\mathrm{e} F_\text{Lrl} + \frac{1}{4}\tau_\mathrm{e}, \\
    I_\omega \dot{\omega}_\text{rr} =& -R_\mathrm{e} F_\text{Lrr} + \frac{1}{4}\tau_\mathrm{e}, \\
    \dot{\tau}_\mathrm{e} = & \Delta \tau_\mathrm{e} 
\end{align}
where $J_\omega$ is the wheel inertia. 
To formulate a lane keeping problem, the vehicle dynamics model needs to be combined with the model of the vehicle position and heading angle. 
When they are described in a road coordinate, the road curvature can be viewed as a disturbance pushing the vehicle out of the lane. 
By defining the yaw rate error $\psi$ between the vehicle and the road, its dynamics can be described as 
\begin{align}
    \dfrac{\mathrm{d}\psi}{\mathrm{d} t}  & = r - v_x \rho(s)  
\end{align}
where $\rho$ is the curvature of the road and can be regarded as a function of the distance $s$ along the road.  
The dynamics of the distance $s$ and the lateral error $e$ can be obtained by solving the following equations:
\begin{align}
     \dfrac{\mathrm{d} s}{\mathrm{d} t} &=  v_x \cos\psi - v_y \sin\psi, \\
    \dfrac{\mathrm{d} e }{\mathrm{d} t} &=  v_y \cos\psi + v_x \sin\psi. 
\end{align} 
The list of parameters and their values are summarized in Table~\ref{tab:parameters}.

\begin{table}[t]
\caption{Parameter Setting} 
 \centering
\begin{tabular}{c l c}
\hline
\textbf{Symbol} & \textbf{Definition} & \textbf{Value} \\
\hline
$m$ & Vehicle mass & $\SI{1430}{kg}$ \\
$R_\mathrm{e}$ & Wheel radius & $\SI{0.325}{m}$\\ 
$I_z$ & Yaw moment of inertia & $\SI{2059}{kg .m^2}$\\
$I_\omega$ & Wheel moment of inertia & $\SI{1.68}{kg.m^2}$ \\
$L_f$ & Distance from CG to front axis & $1.05 \mathrm{~m}$ \\
$L_r$ & Distance from CG to rear axis  & $1.61 \mathrm{~m}$ \\
$W$ & width of the vehicle & $\SI{1.55}{m}$\\
$V_\mathrm{s}$ & Stribeck relative velocity & $\SI{6.6}{m/s}$ \\
$\sigma_{0x}$ & Longitudinal rubber stiffness & $\SI{195}{m^{-1}}$ \\
$\sigma{2x}$ & Longitudinal relative viscous & $\SI{0.001}{s/m}$ \\
$\kappa_x$ & Longitudinal load distribution factor & 13.4 \\
$\sigma_{0y}$ & Lateral rubber stiffness & $\SI{195}{m^{-1}}$ \\ 
$\sigma_{2y}$ & Lateral relative viscous damping & $\SI{0.001}{s/m}$ \\
$\kappa_y$ & Lateral load distribution factor & 13.4 \\
$\mu_\mathrm{s}$ & Static friction coefficient & 0.55 \\
$\mu_\mathrm{c}$ & Dynamics friction coefficient & 0.35 \\
$g$ & gravitational acceleration & $9.8~\textrm{~m/s}^2$ \\
\hline
\end{tabular}
\label{tab:parameters}
\end{table}

\section{Experiment Details}
\label{sec:appendix_exp_detail}

In this section, we provide experiment details corresponding to Section~\ref{sec:experiment}. 

\subsection{General Setup}
\label{sec:sub_appendix_general_setup}

For the state space in~\eqref{eq:vehicle_state}, when the vehicle is traveling on a road with a non-zero curvature, the curvature is viewed as a disturbance on the heading error $\psi$ described by
\begin{align}
    \dot{\psi} = r - v_x \rho(s),
\end{align}
where $\rho(s)$ denotes the radius of curvature as a function of $s$. 

The road-tire friction coefficient $\mu$ is an unknown fixed parameter. It typically takes a value from 0.7 to 0.9 for a dry road condition, from 0.4 to 0.7 for a wet road condition, and from 0.2 to 0.4 for an icy road condition \citep{rajamani2012}.  
While various methods including extended Kalman filtering~\citep{Ray1997,Enisz2015} are proposed for online estimation of the friction coefficient, in all experiments we chose to use a Bayesian estimator.
Note that for the estimator~\eqref{eq:estimator_update} Theorem \ref{thm:main_theorem_aug} holds without assuming Gaussian distributions for the parameter estimates.

For all three MPC-based adaptive safe control methods considered, we define the MPC cost function as  
\begin{align}
    J_{\text{MPC}} = \sum_{k=1}^{T_{\mathrm{mpc}}} 0.05 (v_x(k)-V_\mathrm{ref})^2 + e(k)^2 + \psi(k)^2,
\end{align}
where $T_{\mathrm{mpc}}$ is the MPC prediction horizon. 
We implement the nonlinear MPC controller using \texttt{nlmpc} provided by the Model Predictive Control Toolbox in MATLAB with automatic C code generation. 
We set the sampling time to \SI{0.2}{s}, the control horizon to 2 time steps. The initial velocity of the vehicle is chosen as $v_0 = 20 \mathrm{km/h}$.
For the \textbf{AMPC} method, we use the expectation $\mu_k$ of the posterior $P_k(\mu|H_k)$ in the prediction model for adaptation. 
For control-dependent barrier functions (\textbf{CDBF})~\citep{Huang2021,YW2022}, it constrains the vehicle states within a time-varying and control-dependent lateral stability region, and such constraints are incorporated inside the nonlinear adaptive MPC.
For the proposed method, we integrate the adaptive probabilistic safety certificate (PSC)~\eqref{eq:conditioning} with the MPC framework by adding the safety constraint \eqref{eq:safety_condition_each_Zt_aug}.

\begin{figure}[t!]
    \centering
    \begin{minipage}[b]{0.45\linewidth}
        \centering
        \includegraphics[width=\linewidth]{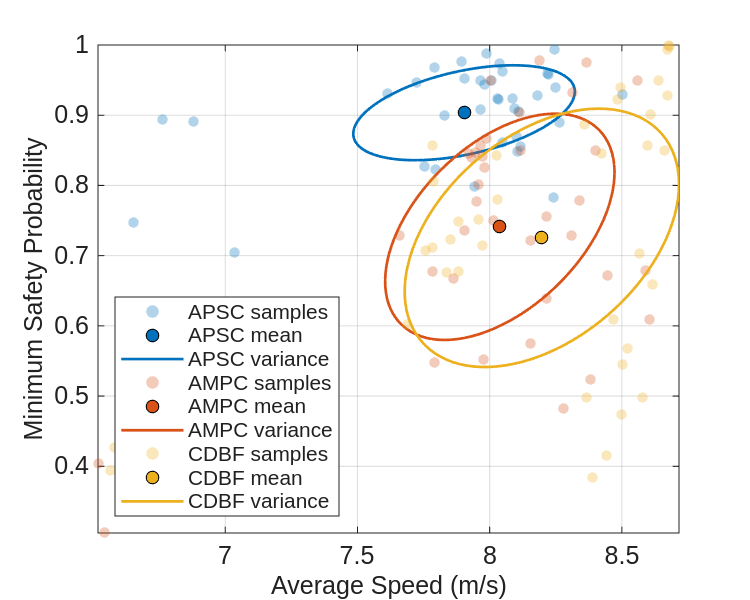}
        \caption*{(a) Feasible cases. }
        \label{fig:tradeoff_feasible}
    \end{minipage}
    \begin{minipage}[b]{0.45\linewidth}
        \centering
        \includegraphics[width=\linewidth]{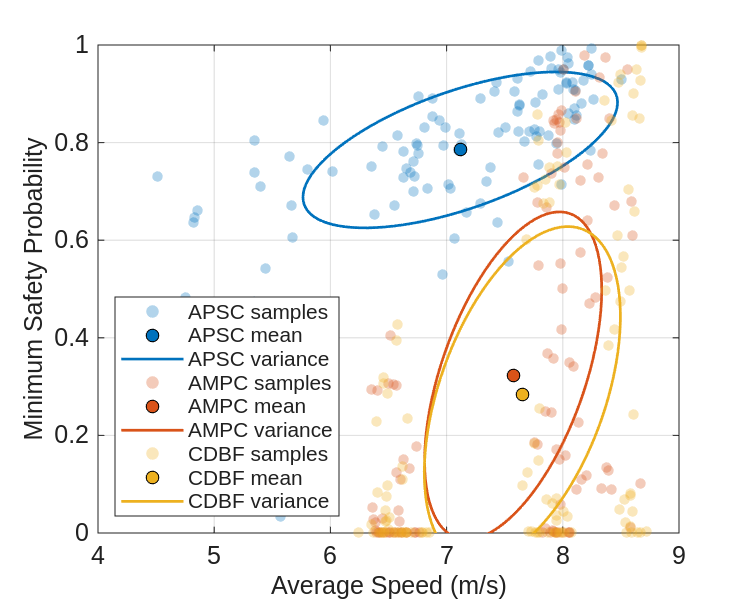}
        \caption*{(b) All cases.}
        \label{fig:tradeoff_all}
    \end{minipage}
    \caption{Safety vs. efficiency trade-offs with MPC-based adaptive safe control variants. \textbf{APSC:} Proposed adaptive probabilistic safety certificate. \textbf{AMPC:} Adaptive model predictive control. \textbf{CDBF:} Control-dependent barrier functions.}
    \label{fig:tradeoff_full}
\end{figure}

\subsection{Adaptive Safe Control}
\label{sec:sub_appendix_adaptive_psc}

For the experiments in Section~\ref{sec:adaptive_mpc_exp}, the online computation time at each control step is reported in Fig.~\ref{fig:horizon_vs_computation}.
All computation time is measured on a workstation with AMD EPYC 7763@2.45GHz. 
As shown, the computation time of all methods increases with the MPC horizon. Combined with the safety results in Fig.~\ref{fig:horizon_vs_performance}, these findings indicate that only the proposed method achieves the desired long-term safety within real-time computation using short MPC horizons.

\begin{figure}
 \centering
\includegraphics[width=0.5\linewidth]{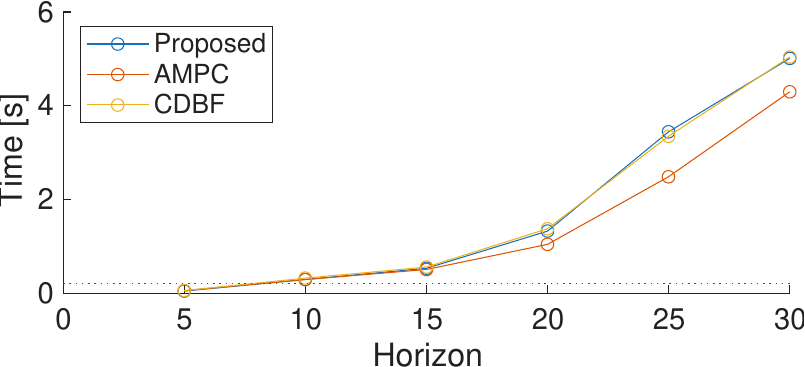} 
 \caption{Per step online computation time vs. MPC horizon $T_{\mathrm{mpc}}$.} 
 \label{fig:horizon_vs_computation}
\end{figure}

Among the 108 distinctive cases for quantitative analysis in Section~\ref{sec:adaptive_mpc_exp}, settings that yield a safety probability below 0.3 for all methods are marked as infeasible, as the initial conditions and safety requirements are too stringent for any method to satisfy (the sufficiently safe initial condition in Theorem~\ref{thm:main_theorem_aug} is violated). Among the 108 configurations, 37 are feasible. Fig.~\ref{fig:tradeoff_full} visualizes the trade-offs for both the feasible and overall cases. We can see that the proposed adaptive PSC consistently achieves higher safety probability than that of AMPC and CDBF, while maintaining comparable average vehicle velocities. These results show that the proposed method achieves better trade-offs between safety and efficiency by ensuring long-term safety without compromising performance.

\subsection{LLM Integration}
\label{sec:sub_appendix_llm_setup}


We provide full quantitative results for LLM integration with adaptive PSC. 
For case (i), where the human exhibits varying driving aggressiveness preferences over time, we report the results in Table~\ref{tab:llm_case_1_full_icy}. It can be seen that across diverse LLM models with different sizes and capabilities, the proposed framework can adapt to human preferences accordingly, while consistently ensuring safety of the system under road condition uncertainties.
For case (ii), where the human provides erroneous judgments about road conditions, we provide the full quantitative results in Table~\ref{tab:llm_case_2_full}. It can be observed that across diverse LLM models, the proposed framework can leverage historical data to refine the prior estimate of the friction coefficient, and therefore achieves smoother yet safe and efficient subsequent runs despite persistent erroneous human instruction. 

The example human instructions for different categories used in the experiments are shown in Table~\ref{table:user_inputs}. The empirical \textit{Safety} reported in the results are calculated through the ratio of time period where the lateral lane deviation is less than $3\mathrm{m}$

{\tiny
\begin{longtable}[c]{@{}lll@{}}
\toprule
\textbf{Aggressive User Input} & \textbf{Conservative User Input} & \textbf{Dry and Unsure User Input} \\ 
\midrule
\endfirsthead

\toprule
\multicolumn{2}{l}{\textit{Continuation of Table~\ref{table:user_inputs}}} \\ 
\midrule
\textbf{Aggressive User Input} & \textbf{Conservative User Input} & \textbf{Dry and Unsure User Input}\\ 
\midrule
\endhead

\midrule
\multicolumn{2}{r}{\textit{Continued on next page}} \\ 
\bottomrule
\endfoot

\bottomrule
\endlastfoot

I want to drive aggressively and push the limits. &	That was too fast, please slow down. &	The road seems dry, but I'm not entirely sure. \\
Go for maximum performance and speed. &	I want a more conservative setup. &	It looks like the road is dry, though I can't be 100\% certain. \\
Make it react faster and turn sharper. &	Reduce speed, it felt unsafe. &	The road appears to be dry, but I could be mistaken. \\
I prefer bold and assertive driving. &	Too aggressive—make it smoother. &	Seems like a dry surface, but not completely confident. \\
Drive with high energy and confidence. &	Please prioritize safety over speed. &	I think the road is dry, though there's some doubt. \\
Prioritize speed and agility over comfort. &	I prefer a slower and steadier control. &	The surface looks dry, but I'm not fully sure. \\
I want a powerful, race-style control. &	Reduce acceleration, it's too strong. &	The road feels dry, but I can't say for sure. \\
Let it take corners fast and hard. &	Too risky on this road. &	I believe the road is dry, though not completely sure. \\
Push acceleration and minimize hesitation. &	I want gentler lane corrections. &	The pavement seems dry, but I might be wrong. \\
I want fast, strong, and responsive steering. &	Too harsh—make it more stable. &	Probably dry road conditions, but not confirmed. \\
Use the most dynamic, performance-oriented mode. &	It felt unsafe, drive more cautiously. &	Looks like dry asphalt, though uncertain. \\
Allow wider lane deviation for faster maneuvers. &	Please slow the response a bit. &	I think it's dry, but can't be 100\% confident. \\
Give me riskier but more exciting control. &	It's too aggressive for my taste. &	The road seems fine and dry, but I could be wrong. \\
I want a high-speed, competitive setup. &	Too fast, not comfortable. &	Dry conditions, I think, but not fully confident. \\
Drive aggressively even when conditions vary. &	Please reduce throttle aggressiveness. &	It appears dry, but I'm not completely convinced. \\
Maximize performance, not stability. &	I want smoother driving this time. &	Road looks dry overall, though not totally sure. \\
I want it to move quickly, even if precision drops. &	Be more careful, it's not a race. &	I'd say it's dry, but there's a bit of doubt. \\
Use a high-tolerance mode for rapid turns. &	Too nervous in steering. &	Seems dry to me, but I can't confirm it. \\
I prefer fast, high-response control with less caution. &	Make it more controlled and less twitchy. &	Looks like a dry surface, but hard to be sure. \\
Push harder on throttle and steering—go all in. &	Reduce the cornering speed. &	Feels like dry conditions, but I might be mistaken. \\
Enable sport mode for maximum power. &	I want safety first, not speed. &	I suspect the road is dry, but I'm unsure. \\
Go fast and be decisive in lane changes. &	It's too reactive, calm it down. &	Probably dry, though I can't guarantee it. \\
Increase steering gain for instant response. &	Go easy on acceleration. &	I think it's dry, though uncertain. \\
I want quick corrections and hard acceleration. &	Too much power, reduce it. &	Dry road, I guess, but not certain. \\
Push for maximum grip and speed. &	Please drive more conservatively. &	The road looks dry, but there's some uncertainty. \\
Drive with racing-level confidence. &	I want slower and safer turns. &	I'd assume it's dry, but not entirely sure. \\
I want it to feel like track mode. &	That was too harsh in handling. &	It might be dry, but I can't confirm. \\
Accelerate rapidly and maintain momentum. &	Reduce the speed margin. &	Seems dry enough, though unsure. \\
Give me immediate throttle response. &	It's unstable, make it cautious. &	Looks pretty dry, but can't be sure. \\
I want to attack the road, not play safe. &	Too strong corrections, smooth them out. &	I believe it's dry, but not completely confident. \\
Go for aggressive lane keeping and acceleration. &	Please tune it for safety and comfort. &	It seems dry overall, but I can't be certain. \\
Turn faster, brake later. &	Too jerky, I want gentle motion. &	The surface appears dry, but I'm not positive. \\
Max performance—no slowdowns. &	Slow down the transitions. &	Probably dry, but I could be wrong. \\
Push through corners with confidence. &	It felt reckless—be more cautious. &	Road seems dry, though I'm not sure. \\
I want hard acceleration out of every turn. &	Reduce lane correction intensity. &	It looks mostly dry, but uncertain. \\
Keep maximum pace throughout the drive. &	I want conservative, not sporty. &	Dry road I think, though with some doubt. \\
Make the system react instantly to inputs. &	Too quick, not steady enough. &	Appears dry, but uncertain. \\
I want extreme handling and control. &	Please lower the gain. &	I feel like the road is dry, but unsure. \\
Drive like a pro racer. &	It's too responsive, I prefer smoothness. &	It looks fine and dry, but who knows. \\
Go strong on speed and agility. &	Too abrupt acceleration. &	Seems like dry conditions, though not sure. \\
I want a forceful, assertive steering behavior. &	Drive more safely, especially on icy roads. &	Might be dry, can't tell for sure. \\
Keep the drive sharp and energetic. &	Make it softer and more stable. &	Looks like dry pavement, though uncertain. \\
Prioritize lap time over comfort. &	It's too quick to react. &	Feels dry, but I can't be 100\% sure. \\
Take the fastest possible path. &	Slow it down, focus on stability. &	It looks dry, but maybe not everywhere. \\
Push vehicle response to the limit. &	Too unsafe—prioritize control. &	The road appears dry overall, but not confirmed. \\
I want fearless control even at high speed. &	Be less confident, more defensive. &	I think it's mostly dry, but can't be certain. \\
Stay aggressive in all situations. &	Reduce the tracking aggressiveness. &	Dry I think, but not completely sure. \\
Increase tolerance for quick maneuvers. &	I prefer a calm driving style. &	Probably dry, but uncertain. \\
I want fast corner entry and exit. &	Too fast in straight lines. &	Seems dry enough, though with doubt. \\
Push dynamic performance higher. &	It should anticipate slower turns. &	It's likely dry, but I'm not certain. \\
Enable the most responsive mode available. &	Please limit the lane error tolerance. &	Looks dry, but I'm hesitant to confirm. \\
I want strong acceleration and tight control. &	Too risky, I want safe behavior. &	I'd say dry, but not confidently. \\
Reduce safety margin for faster results. &	Make it more cautious, less performance-driven. &	It might be dry, but I'm not sure. \\
Push to maximum allowable lane deviation. &	It accelerated too early. &	Feels like a dry surface, but uncertain. \\
I want fast, snappy handling. &	Reduce throttle and steering gain. &	I think it's dry, though some doubt remains. \\
Stay on throttle—no delays. &	I want a more defensive strategy. &	Road looks dry, but who can say for sure. \\
I prefer performance-first control. &	Please be more stable on normal roads. &	It appears dry, but I don't know for certain. \\
Give me top-speed behavior. &	Too rough handling. &	Seems to be dry, though not confirmed. \\
Drive like a sports car, not a family car. &	Make it more balanced, not so sharp. &	Looks quite dry, but can't confirm. \\
Maintain aggressive control throughout. &	It's overreacting to changes. &	I believe it's dry, though with uncertainty. \\
I want to feel constant power and momentum. &	I want smoother adaptation. &	Dry conditions, I think, but maybe not. \\
Drive hard and take initiative. &	Too much correction effort. &	Road surface looks dry, though uncertain. \\
Go faster on every straight segment. &	Slow the vehicle response overall. &	Probably a dry road, but unsure. \\
Push acceleration early in the turn. &	I want predictable, safe control. &	Seems like dry ground, but who knows. \\
React instantly to direction changes. &	Please calm down the acceleration. &	The road appears fine, but I'm not sure it's dry. \\
I want dynamic and energetic motion. &	Reduce speed and keep traction. &	Looks dry enough, but I could be wrong. \\
Keep high momentum and fast transitions. &	I want less risk and smoother response. &	I think the road's dry, though not certain. \\
Make control firm and quick. &	Too aggressive for current conditions. &	Mostly dry, I believe, but not sure. \\
I want aggressive grip management. &	Please adopt a conservative mode. &	It feels dry, but uncertain. \\
Prioritize time over precision. &	It was overconfident in corners. &	Seems like dry asphalt, though uncertain. \\
Go faster with minimal correction delay. &	Slow it down for better control. &	I'd assume it's dry, but with some hesitation. \\
I prefer a daring, confident mode. &	Make it less sensitive to errors. &	Looks to be dry, but can't confirm. \\
Give me maximum responsiveness. &	I want steady, careful movement. &	I think it's fine and dry, but I'm not sure. \\
Increase the pace—make it intense. &	Too fast, not safe on this surface. &	Seems dry, though uncertain. \\
I want minimal damping and high power. &	Reduce control gain significantly. &	Appears to be dry, but who can say for sure. \\
Accelerate sooner and stronger. &	It should react slower to inputs. &	Probably dry, though not guaranteed. \\
Keep speed constant even on curves. &	Please make it safer and smoother. &	Looks dry overall, but I might be wrong. \\
Make steering bold and direct. &	I prefer stable and low-speed driving. &	I believe the road is dry, but unsure. \\
I want powerful, quick decision behavior. &	Too much oscillation, reduce energy. &	Dry, I think, but not confirmed. \\
Push throttle sensitivity higher. &	Slow down in lane keeping. &	Seems like dry weather, though uncertain. \\
Stay fast through all maneuvers. &	I want it to be gentle and cautious. &	Looks fairly dry, but can't be sure. \\
I want performance at the edge of control. &	It was unstable at high speed. &	Probably dry, though uncertain. \\
Drive aggressively, no holding back. &	Too sudden turns—smooth them out. &	I suspect it's dry, but can't confirm. \\
Prioritize speed gains over stability. &	Please reduce lateral aggressiveness. &	Seems to be dry overall, but unsure. \\
I want an assertive, forward-pushing style. &	Go easy—prioritize stability. &	Looks dry at a glance, though uncertain. \\
Use high responsiveness in every control step. &	Too risky on normal roads. &	I think it's dry, but can't guarantee. \\
Keep momentum high even in risk. &	Make it drive more conservatively. &	Dry surface maybe, but not certain. \\
I want more tolerance for deviation. &	It needs to be slower and more composed. &	It looks dry, but not completely sure. \\
Push limits confidently and recover fast. &	I want lower error tolerance. &	I'm guessing it's dry, but not confident. \\
Increase aggressiveness under all conditions. &	Reduce both speed and correction strength. &	Road seems dry, but can't say for sure. \\
Take the apex at full throttle. &	Please drive gently and safely. &	Probably dry, but unsure. \\
Go heavy on speed, light on caution. &	Too sensitive to minor errors. &	Looks like dry pavement, but can't confirm. \\
I want full race-level control. &	Be cautious, like in slippery conditions. &	Seems dry, but could be wrong. \\
Maintain max speed with strong steering. &	I want to keep it slow and safe. &	I think it's dry, though not 100\%. \\
React instantly and push harder each run. &	Too dynamic—please relax the control. &	Dry conditions likely, but uncertain. \\
I prefer competitive driving over safe driving. &	It's unsafe at this pace. &	Road appears dry, but who knows. \\
Keep acceleration high from start to finish. &	Drive slower and smoother overall. &	It looks dry enough, but I'm unsure. \\
Maximize speed and power, ignore comfort. &	Please limit speed and aggression. &	Seems fine and dry, but uncertain. \\
Use aggressive tuning to push limits. &	Too much risk, make it steady. &	Dry I think, but I'm not certain. \\
Make it fast, fearless, and focused. &	I want a relaxed, defensive mode. &	Looks mostly dry, though not sure. \\
\end{longtable}
} 
\vspace{-1em} 
\captionof{table}{Example user instructions describing three distinct driving intentions: aggressive, conservative, and uncertain-surface (dry and unsure).\label{table:user_inputs}}
\vspace{0.5em}

\begin{table}[h!]
\centering
\scriptsize
\resizebox{\textwidth}{!}{%
\begin{tabular}{ll ccc ccc cc}
\toprule
\multicolumn{2}{c}{\textbf{Icy Road: $\mu \in [0.3, 0.4]$}} 
& \multicolumn{3}{c}{Run~1: Aggressive Preference}
& \multicolumn{3}{c}{Run~2: Conservative Preference}
& \multicolumn{2}{c}{Differences} \\[2pt]
\cmidrule(lr){1-2} \cmidrule(lr){3-5} \cmidrule(lr){6-8} \cmidrule(lr){9-10}
\textbf{Method} & \textbf{LLM} 
& \uline{\textit{Lateral}} & \uline{\textit{Speed}} & \uline{\textit{Safety}}
& \uline{\textit{Lateral}} & \uline{\textit{Speed}} & \uline{\textit{Safety}}
& \uline{\textit{Lateral}} & \uline{\textit{Speed}} \\
\midrule\addlinespace[2pt]\midrule
\multirow{3}{*}{APSC} 
& GPT-4o-mini & $0.78 \!\pm\! 0.65$ & $8.17 \!\pm\! 1.54$ & $99\%$
            & $0.55 \!\pm\! 0.38$ & $7.73 \!\pm\! 1.40$ & $100\%$
            & $-0.23$ & $-0.44$ \\
& GPT-3.5 Turbo & $0.88 \!\pm\! 0.51$ & $8.58 \!\pm\! 1.02$ & $99\%$
                 & $0.52 \!\pm\! 0.42$ & $7.94 \!\pm\! 1.40$ & $100\%$
                 & $-0.36$ & $-0.64$ \\

& Gemini 2.5 Flash & $0.90 \!\pm\! 0.54$ & $8.43 \!\pm\! 1.01$ & $99\%$
                    & $0.39 \!\pm\! 0.36$ & $6.92 \!\pm\! 1.52$ & $96\%$
                    & $-0.51$ & $-1.51$ \\

& Gemini 2.0 Flash & $1.03 \!\pm\! 0.68$ & $8.50 \!\pm\! 1.15$ & $96\%$
                    & $0.54 \!\pm\! 0.41$ & $7.55 \!\pm\! 1.57$ & $96\%$
                    & $-0.49$ & $-0.95$ \\
& DeepSeek-Chat & $0.87 \!\pm\! 0.55$ & $8.52 \!\pm\! 1.51$ & $100\%$
                 & $0.43 \!\pm\! 0.41$ & $7.17 \!\pm\! 1.64$ & $100\%$
                 & $-0.44$ & $-1.35$ \\
                 
\midrule
\multirow{3}{*}{AMPC}                  
& GPT-4o-mini & $1.91 \!\pm\! 0.60$ & $8.10 \!\pm\! 0.93$ & $43\%$
               & $1.87 \!\pm\! 0.77$ & $8.46 \!\pm\! 0.95$ & $54\%$
               & $-0.04$ & $+0.36$ \\
& GPT-3.5 Turbo & $2.16 \!\pm\! 0.42$ & $7.90 \!\pm\! 0.72$ & $42\%$
                 & $2.09 \!\pm\! 0.74$ & $8.38 \!\pm\! 0.97$ & $60\%$
                 & $-0.07$ & $+0.48$ \\
& Gemini 2.5 Flash & $1.85 \!\pm\! 0.57$ & $8.20 \!\pm\! 0.99$ & $50\%$
                    & $1.77 \!\pm\! 0.74$ & $8.60 \!\pm\! 1.09$ & $57\%$
                    & $-0.08$ & $+0.40$ \\
& Gemini 2.0 Flash & $1.75 \!\pm\! 0.60$ & $8.00 \!\pm\! 0.74$ & $39\%$
                    & $1.90 \!\pm\! 0.60$ & $7.98 \!\pm\! 0.73$ & $60\%$
                    & $0.15$ & $-0.02$ \\
& DeepSeek-Chat & $2.07 \!\pm\! 0.65$ & $8.15 \!\pm\! 0.86$ & $39\%$
                 & $1.72 \!\pm\! 0.69$ & $8.45 \!\pm\! 1.01$ & $55\%$
                 & $-0.35$ & $+0.30$ \\
\midrule
\multirow{3}{*}{CDBF} 
& GPT-4o-mini & $2.12 \!\pm\! 0.56$ & $8.40 \!\pm\! 1.46$ & $48\%$
               & $1.61 \!\pm\! 0.70$ & $8.32 \!\pm\! 0.79$ & $71\%$
               & $-0.51$ & $-0.08$ \\
& GPT-3.5 Turbo & $1.96 \!\pm\! 0.60$ & $9.88 \!\pm\! 5.52$ & $64\%$
                 & $1.67 \!\pm\! 0.70$ & $8.38 \!\pm\! 0.79$ & $61\%$
                 & $-0.29$ & $-1.50$ \\
& Gemini 2.5 Flash & $1.62 \!\pm\! 0.69$ & $9.31 \!\pm\! 4.45$ & $68\%$
                    & $1.52 \!\pm\! 0.59$ & $9.03 \!\pm\! 1.75$ & $66\%$
                    & $-0.10$ & $-0.28$ \\
& Gemini 2.0 Flash & $1.76 \!\pm\! 0.71$ & $8.78 \!\pm\! 1.49$ & $57\%$
                    & $1.64 \!\pm\! 0.59$ & $8.63 \!\pm\! 1.56$ & $69\%$
                    & $-0.12$ & $-0.15$ \\
& DeepSeek-Chat & $2.09 \!\pm\! 0.54$ & $8.18 \!\pm\! 1.39$ & $52\%$
                 & $1.60 \!\pm\! 0.65$ & $8.65 \!\pm\! 1.53$ & $72\%$
                 & $-0.49$ & $+0.47$ \\
\bottomrule
\end{tabular}}
\caption{Full quantitative results for case (i) under icy road conditions.}
\label{tab:llm_case_1_full_icy}
\end{table}

\begin{table}[h!]
\centering
\scriptsize
\resizebox{\textwidth}{!}{%
\begin{tabular}{ll ccc ccc cc}
\toprule
\multicolumn{2}{c}{\textbf{Icy Road: $\mu \in [0.3, 0.4]$}} 
& \multicolumn{3}{c}{Run~1: Wrong Premise on Dry Road}
& \multicolumn{3}{c}{Run~2: Wrong Premise on Dry Road}
& \multicolumn{2}{c}{Differences} \\[2pt]
\cmidrule(lr){1-2} \cmidrule(lr){3-5} \cmidrule(lr){6-8} \cmidrule(lr){9-10}
\textbf{Method} & \textbf{LLM} 
& \uline{\textit{Lateral}} & \uline{\textit{Speed}} & \uline{\textit{Safety}} 
& \uline{\textit{Lateral}} & \uline{\textit{Speed}} & \uline{\textit{Safety}} 
& \uline{\textit{Lateral}} & \uline{\textit{Speed}} \\ 
\midrule\addlinespace[2pt]\midrule
\multirow{3}{*}{APSC} 
& GPT-4o-mini & $0.82 \!\pm\! 0.16$ & $0.35 \!\pm\! 0.03$ & $94\%$
            & $ 0.58 \!\pm\! 0.21$ & $0.35 \!\pm\! 0.03$ & $100\%$
            & $-0.14$ & $-0.32$ \\ 
& GPT-3.5 Turbo
& $0.90 \!\pm\! 0.04$ & $0.35 \!\pm\! 0.03$ & $95\%$
& $0.44 \!\pm\! 0.24$ & $0.35 \!\pm\! 0.03$ & $100\%$
& $-0.30$ & $-0.11$ \\
            
& Gemini 2.5 Flash   & $0.90 \!\pm\! 0.04$ & $0.36 \!\pm\! 0.03$ & $100\%$
            & $0.56 \!\pm\! 0.26$ & $0.36 \!\pm\! 0.03$ & $100\%$
            & $-0.11$ & $-0.08$ \\ 
            
& Gemini 2.0 Flash
& $0.89 \!\pm\! 0.06$ & $0.36 \!\pm\! 0.03$ & $97\%$
& $0.47 \!\pm\! 0.20$ & $0.35 \!\pm\! 0.03$ & $100\%$
& $-0.12$ & $-0.12$ \\

& DeepSeek-Chat & $0.90 \!\pm\! 0.04$ & $0.35 \!\pm\! 0.03$ & $98\%$
            & $0.84 \!\pm\! 0.14$ & $0.36 \!\pm\! 0.03$ & $100\%$
            & $-0.42$ & $+0.04$ \\ 

\midrule
\multirow{3}{*}{AMPC} 
& GPT-4o-mini
& $0.82 \!\pm\! 0.16$ & $0.35 \!\pm\! 0.03$ & $23\%$
& $0.44 \!\pm\! 0.12$ & $0.35 \!\pm\! 0.03$ & $56\%$
& $-0.30$ & $-0.55$ \\
& GPT-3.5 Turbo
& $0.90 \!\pm\! 0.04$ & $0.35 \!\pm\! 0.03$ & $38\%$
& $0.40 \!\pm\! 0.20$ & $0.36 \!\pm\! 0.03$ & $65\%$
& $+0.23$ & $-0.16$ \\
& Gemini 2.5 Flash
& $0.90 \!\pm\! 0.04$ & $0.35 \!\pm\! 0.03$ & $78\%$
& $0.70 \!\pm\! 0.28$ & $0.34 \!\pm\! 0.03$ & $63\%$
& $+0.04$ & $+0.10$ \\
& Gemini 2.0 Flash
& $0.89 \!\pm\! 0.06$ & $0.35 \!\pm\! 0.03$ & $41\%$
& $0.45 \!\pm\! 0.19$ & $0.35 \!\pm\! 0.03$ & $49\%$
& $+0.07$ & $-1.03$ \\
& DeepSeek-Chat
& $0.90 \!\pm\! 0.04$ & $0.35 \!\pm\! 0.03$ & $28\%$
& $0.60 \!\pm\! 0.17$ & $0.34 \!\pm\! 0.03$ & $62\%$
& $-0.57$ & $-0.09$ \\
 
\midrule
\multirow{3}{*}{CDBF} 
& GPT\textendash 4o\textendash mini
& $0.82 \!\pm\! 0.16$ & $0.34 \!\pm\! 0.03$ & $21\%$
& $0.43 \!\pm\! 0.10$ & $0.35 \!\pm\! 0.03$ & $63\%$
& $-0.82$ & $-0.05$ \\
& GPT-3.5 Turbo
& $0.90 \!\pm\! 0.04$ & $0.34 \!\pm\! 0.03$ & $32\%$
& $0.37 \!\pm\! 0.17$ & $0.36 \!\pm\! 0.02$ & $76\%$
& $-0.59$ & $-0.11$ \\
& Gemini 2.5 Flash
& $0.90 \!\pm\! 0.04$ & $0.36 \!\pm\! 0.03$ & $59\%$
& $0.69 \!\pm\! 0.28$ & $0.36 \!\pm\! 0.03$ & $56\%$
& $+0.07$ & $-1.10$ \\
& Gemini 2.0 Flash
& $0.89 \!\pm\! 0.06$ & $0.35 \!\pm\! 0.03$ & $35\%$
& $0.45 \!\pm\! 0.22$ & $0.35 \!\pm\! 0.03$ & $51\%$
& $-0.21$ & $-1.13$ \\
& DeepSeek\textendash Chat
& $0.90 \!\pm\! 0.04$ & $0.35 \!\pm\! 0.04$ & $19\%$
& $0.60 \!\pm\! 0.17$ & $0.35 \!\pm\! 0.03$ & $45\%$
& $-0.08$ & $+0.17$ \\

\bottomrule
\end{tabular}}
\caption{Full quantitative results for case (ii).}
\label{tab:llm_case_2_full}
\end{table}

\section{Comparison with Non-adaptive Controllers}
\label{sec:exp_verification}

In this section, we show that the proposed adaptive probabilistic safety certificate effectively mediates the trade-off between safety and performance in uncertain road conditions against non-adaptive controllers. 
For the nominal controller $\pi$, we consider the following state-feedback controller for linear vehicle models as in~\cite{ames2017control}.
\begin{align}
 \pi(x) =  - K (x - x_\mathrm{ff}(x)),  \label{eq:nominal_controller}
\end{align}
where $K$ is the feedback gain and $x_\mathrm{ff}$ is a feed-forward term. The control inputs $\Delta \delta$ and $\Delta\tau_\mathrm{e}$ are 
\begin{align}
    \Delta \delta = K_\mathrm{lateral}( [v_y,\, r, \, \delta, e, \psi ]^\top - x_\mathrm{ff,lateral} ),
\end{align}
with $x_\mathrm{ff,lateral}=[0, v_x \rho(s), 0, 0, 0]^\top$ and $K_\mathrm{lateral}$ determined by solving an LQR~\citep{ames2017control}, and 
\begin{align}
    \Delta \tau_\mathrm{e} = -K_\mathrm{v}(v_x - V_\mathrm{ref}) - K_\mathrm{T} T_\mathrm{e},
\end{align}
where the reference $V_\mathrm{ref}$ is set to \SI{40}{km/h}, and the gains $K_\mathrm{v}$ and $K_\mathrm{T}$ are tuned such that the vehicle speed converges to the reference value by \SI{30}{s} on a straight road. 
Thus, the gain $K$ and the term $x_\mathrm{ff}$ in \eqref{eq:nominal_controller} follow by properly aligning $K_\mathrm{lateral}$, $K_\mathrm{v}$, $K_\mathrm{T}$, $V_\mathrm{ref}$, and $v_x\rho(s)$. 
Then the safe control action is given by solving the safe control optimization~\eqref{eq:conditioning} with objective $J = \dfrac{1}{2} \| \tilde{u} - \pi(X_k) \|^2$, 
where $\pi$ is the reference controller $N$ in \eqref{eq:conditioning}, and the risk tolerance $\epsilon$ and the function $\gamma$ in \eqref{eq:safety_condition_each_Zt_aug} are $\epsilon = 0.1$ and $\gamma(x) = x$. 
The safety probability $\Psi^\pi_{H_k}(X_k)$ and its derivatives are estimated by Monte Carlo simulations with a sampling rate of \SI{0.1}{s} and an outlook horizon of $T=75$ (i.e., \SI{7.5}{s} ahead) with 100 samples. 
Note that this computation can be replaced by other online/offline methods to learn the safety probability from driving data or numerical simulations~\citep{wang2023generalizable,wang2025generalizable,hoshino2024physics}. In this example, the estimator prior at $k=0$ is chosen as $\mu_0=0.3$ and $\sigma_0^2=0.01$, and $\bar{\sigma}^2$ is assumed to be 0.1.  
Figure~\ref{fig:posterior_distribution} shows how the posterior distribution changes over time in a case where the true friction coefficient is $\mu^\ast=0.7$. It can be seen that the prior covers the icy road conditions, and the distribution gradually shifts to the right over time. 

\begin{figure}[t!]
 \centering
\includegraphics[width=0.8\linewidth]{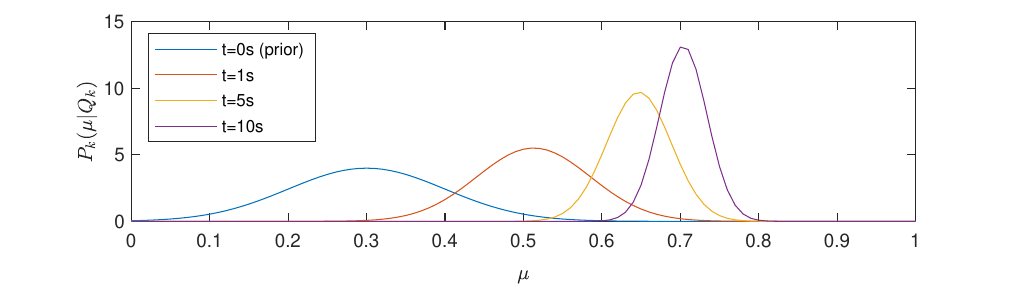} 
 \caption{Sequential update of the posterior distribution of $\mu$ when the true friction coefficient is $\mu^\ast = 0.7$. } 
 \label{fig:posterior_distribution}
\end{figure}

\begin{figure}[t!]
    \centering

    \begin{minipage}[b]{0.30\linewidth}
        \centering
        \includegraphics[width=\linewidth]{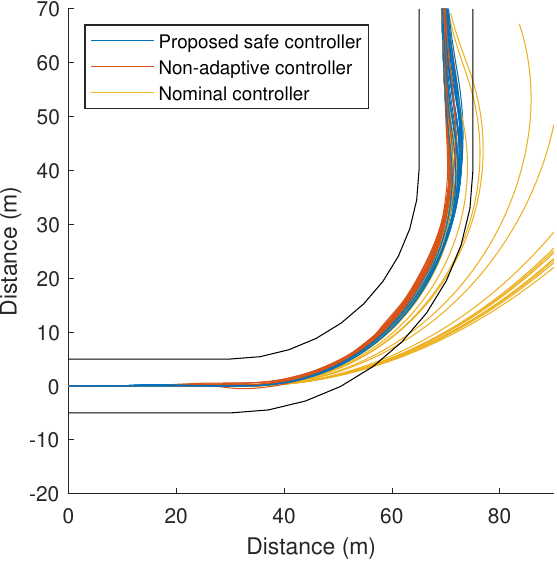}
        \caption{Vehicle trajectories.}
        \label{fig:vehicle_trajectories}
    \end{minipage}
    \hfill
    \begin{minipage}[b]{0.32\linewidth}
        \centering
        \includegraphics[width=\linewidth]{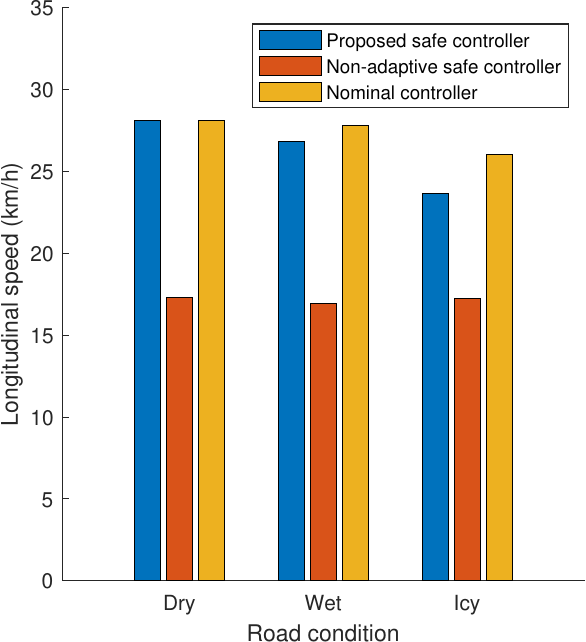}
        \caption{Longitudinal speed.}
        \label{fig:speed_vs_roadcond}
    \end{minipage}
    \hfill
    \begin{minipage}[b]{0.30\linewidth}
        \centering
        \includegraphics[width=\linewidth]{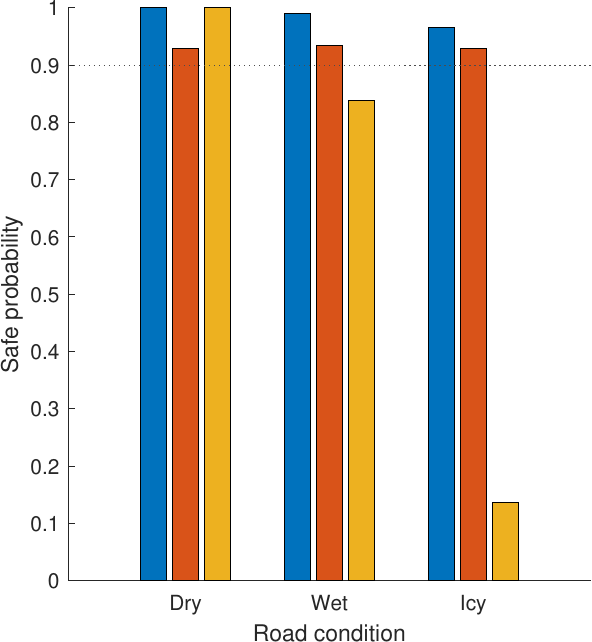}
        \caption{Safety probability.}
        \label{fig:safe_vs_roadcond}
    \end{minipage}


\end{figure}

We compare the closed-loop performance of the proposed adaptive safe controller with two cases: (i) using the nominal controller $\pi$ directly (\textbf{Nominal controller}), and (ii) using the proposed method without updating the posterior $P_k(\mu|Q_k)$ (\textbf{Non-adaptive safe controller}). 
Figure~\ref{fig:vehicle_trajectories} shows the vehicle trajectories for 30 scenarios with different values of the friction coefficient $\mu$, with 10 trajectories for each of the three road conditions: dry (0.7--0.9), wet (0.4--0.7), and icy (0.2--0.4) road conditions. 
We observe that with the proposed safe controller and the non-adaptive safe controller, the vehicle always stays in the lane and turns the curve safely in all cases, but the nominal controller fails to keep the vehicle safe because it does not account for tire-force saturation. 
Figure~\ref{fig:speed_vs_roadcond} and~\ref{fig:safe_vs_roadcond} summarize the trade-offs between safety and performance by showing the longitudinal velocity $v_x$ and the safety probability $\Psi^\pi_{H_k}$, averaged over time and 10 scenarios for each road condition. 
When the nominal controller is used, the longitudinal velocity is high, but the safety probability is low for wet or icy conditions. 
In contrast, the non-adaptive safe controller achieves a high safety probability but keeps the vehicle speed low regardless of the road condition, exemplifying the over-conservatism of worst-case approaches.  
The proposed adaptive safe control method achieves higher longitudinal velocities while ensuring safety by exploiting the posterior updates based on online estimation of the friction coefficient.

\section{LLM Implementation Details}
\label{sec:appendix_llm_implementation}

In this section, we provide additional details on LLM designs and implementation. Both the inference LLM and reasoning LLM are LLMs with specific system-level prompts, to take natural language inputs on control preference to specific control executables. Specifically, for the experiment settings in Section~\ref{sec:experiment}, we use the system-level prompts for inference and reasoning as below. 
The overall procedures for LLM integration with adaptive PSC is shown in Alg.~\ref{alg:llm_loop}.

\begin{algorithm}[H]
\caption{LLM Integration with Adaptive PSC}\label{alg:llm_loop}
\SetAlgoLined
\DontPrintSemicolon

\KwIn{Environment $\mathcal{M}$, total runs $N$, system prompt $P_{\text{sys}}$, LLM model $\mathcal{L}$, human instructions $\{I_n\}_{n=1}^N$}
\KwOut{Rationales $\{R_n\}_{n=1}^N$, control executables $\{E_n\}_{n=1}^N$, logs $\{D_n\}_{n=1}^N$}

\textbf{LLM setup:} Load model $\mathcal{L}$ with $P_{\text{sys}}$ and
initialize buffers: $I_{1:0}\gets\varnothing$, $D_{1:0}\gets\varnothing$.\;

\SetKwFunction{LLMPlan}{LLMPlan}
\SetKwProg{Fn}{Function}{:}{}
\Fn{\LLMPlan{$\mathcal{L}$,\, $I_{1:n},\, D_{1:n-1}$}}{
  Query LLM $\mathcal{L}$ with $\big(P_{\text{sys}},\ I_{1:n},\ D_{1:n-1}\big)$ to obtain:\;
  {\setlength{\leftmargini}{10pt}%
   \setlength{\itemsep}{-5pt}%
   \setlength{\parskip}{-2pt}%
   \begin{itemize}
     \item \textbf{Rationales} $R_n$ \tcp*{interpretable reasoning traces}
     \item \textbf{Executables} $E_n=\{\phi_n,~\mathcal{E}_n,~J_n~,~\cdots\}$ \tcp*{safety, estimator \& control specifications}
   \end{itemize}}%
  \KwRet $E_n$\;
}

\For{$\mathrm{Run}\; n \gets 1$ \KwTo $N$}{
  Human input instruction $I_n$; \quad $I_{1:n}\gets I_{1:n-1}\cup\{I_n\}$\;

  \eIf{$n=1$}{
    $R_n, E_n \gets$ \LLMPlan{$\mathcal{L}$,\, $I_{1:1},\, D_{1:0}$} \tcp*{no prior data}
  }{
    $R_n, E_n \gets$ \LLMPlan{$\mathcal{L}$,\, $I_{1:n},\, D_{1:n-1}$}
  }

  Output $R_n$ as a summary\;
  
  Perform adaptive safe control with $E_n$ and produce trajectory $\tau_n$ and metrics\; \tcp*{e.g., Adaptive PSC in Alg.~\ref{alg:safe_control}}

  Form log $D_n=\{\tau_n,\ \text{safety stats},\ \text{costs},\ \text{posterior from }\mathcal{E}_n,~\cdots\}$;\;
  $D_{1:n}\gets D_{1:n-1}\cup\{D_n\}$\;
}
\end{algorithm}


\vspace{1em}

\textbf{Inference LLM system prompts:}

\begin{lstlisting}
"You are an expert inferring initial controller priors from a short user preference." + ...
    " Return STRICT JSON with keys: e_max, mu_0, sigma_0, bar_sigma, assumptions, rationale." + ...
    " Meanings:" + ...
    " - e_max: maximum lane tracking error tolerance. Larger for more aggressive turns/risk; smaller for more conservative/precise." + ...
    " - mu_0: initial prior for road-tire friction (icy small, normal medium, dry large)." + ...
    " - sigma_0: uncertainty (std^2) of the friction prior; larger if the user sounds unsure or contradictory." + ...
    " - bar_sigma: confidence of the estimator on its measurements; increase if not sure if estimator is good." + ...
    " Policy:" + ...
    " - If the user uses vague words (""seems"", ""maybe"", ""not sure"", ""probably""), pick the most likely road class they stated," + ...
    " - Only change e_max with explicit user cues." + ...
    " Valid discrete ranges:" + ...
    " - e_max in {3,5,10}; mu_0 in {0.3,0.5,0.9}; sigma_0 in {0.05,0.3}; bar_sigma in {0.05,0.3}." + ...
    " Output ONLY JSON in this exact shape: " + ...
    " {""e_max"":0,""mu_0"":0.0,""sigma_0"":0.0,""bar_sigma"":0.0,""assumptions"":{""style"":"""",""road"":"""",""speed_kmh"":0,""lane_quality"":""""},""rationale"":""""} " + ...
    " Ensure values are from the allowed sets and remember these are initial priors, not ground truth."
\end{lstlisting}

\textbf{Reasoning LLM system prompts:}

\begin{lstlisting}
    "=== Quantitative feedback from the last run ===\n%s\n\n" + ...
    "=== User feedback ===\n%s\n\n" + ...
    "=== Your task ===\n" + ...
    "Based on both quantitative feedback and qualitative user feedback, infer new reasonable parameters." + ...
    "Return STRICT JSON with keys: e_max, mu_0, sigma_0, bar_sigma, assumptions, rationale.\n" + ...
    " Meanings:" + ...
    " - e_max: maximum lane tracking error tolerance. Larger for more aggressive turns/risk; smaller for more conservative/precise." + ...
    " - mu_0: initial prior for road-tire friction (icy small, normal medium, dry large)." + ...
    " - sigma_0: uncertainty (std^2) of the friction prior; larger if the user sounds unsure or contradictory." + ...
    " - bar_sigma: confidence of the estimator on its measurements; increase if not sure if estimator is good. You should try to trust the estimator." + ...
    " Policy:" + ...
    " - If the user uses vague words (""seems"", ""maybe"", ""not sure"", ""probably""), pick the most likely road class they stated," + ...
    " - Only change e_max with explicit user cues." + ...
    " Valid discrete ranges:" + ...
    " - e_max in {3,5,10}; mu_0 in {0.3,0.5,0.9}; sigma_0 in {0.05,0.3}; bar_sigma in {0.05,0.3}." + ...
    " - Keep bar_sigma=0.05. Only set bar_sigma=0.3 if the text explicitly mentions sensing/visibility problems (fog/rain/snow/glare/low light/sensor fault) and explain why.\n" + ...
    " - Keep sigma_0=0.05 and align mu_0 with previous estimation. Set sigma_0=0.3 and different mu_0 only if (a) the user hedges about the ROAD (""seems/maybe/not sure/probably""), or (b) the user statement contradicts quantitative feedback suggesting a different friction class; explain the uncertainty.\n" + ...
    " Output ONLY JSON in this exact shape: " + ...
    " {""e_max"":0,""mu_0"":0.0,""sigma_0"":0.0,""bar_sigma"":0.0,""assumptions"":{""style"":"""",""road"":"""",""speed_kmh"":0,""lane_quality"":""""},""rationale"":""""} " + ...
    "Ensure values are from the allowed sets and remember these are initial priors, not ground truth.\n";
\end{lstlisting}

\section{LLM for Barrier Function Generation}
\label{sec:llm_generated_cbf}

\begin{figure}
    \centering
    \includegraphics[width=0.85\linewidth]{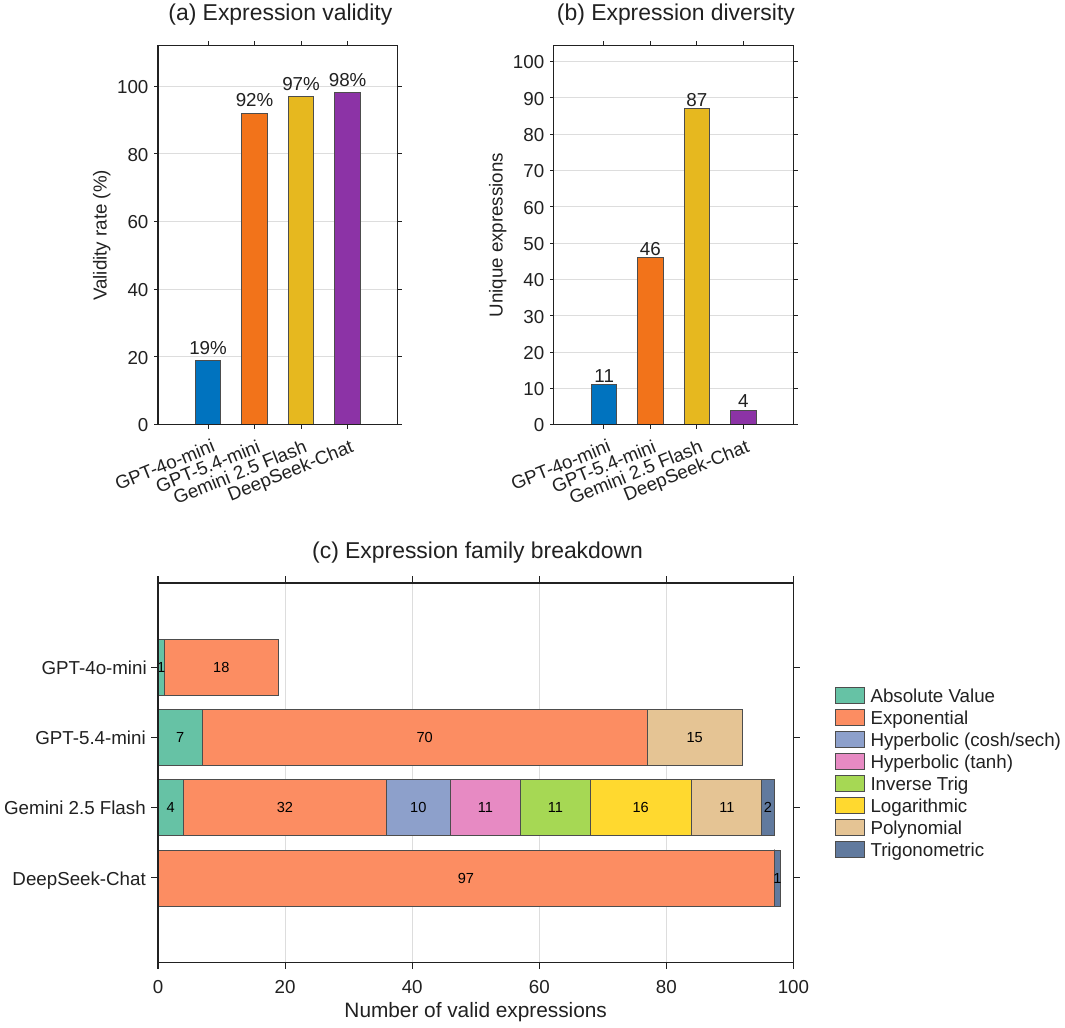}
    \caption{Validity, diversity and form breakdown of LLM generated barrier functions.}
    \label{fig:llm_generated_phi}
\end{figure}

\begin{figure}
    \centering
    \includegraphics[width=0.55\linewidth]{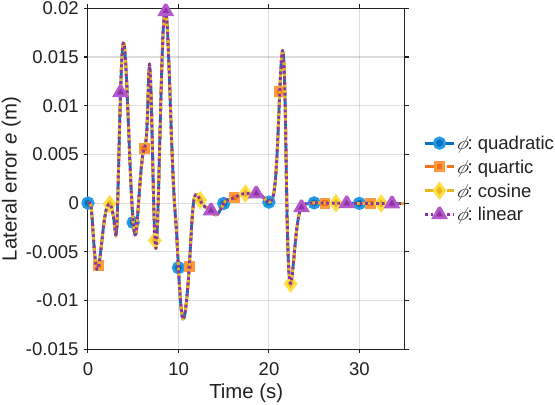}
    \caption{Lateral lane error under the proposed adaptive PSC using different LLM-generated barrier functions for a fixed safety requirement. All trajectories overlap, indicating that the method is agnostic to the barrier function form.}
    \label{fig:psc_results_llm_phi}
\end{figure}

\begin{figure}
    \centering
    \includegraphics[width=\columnwidth]{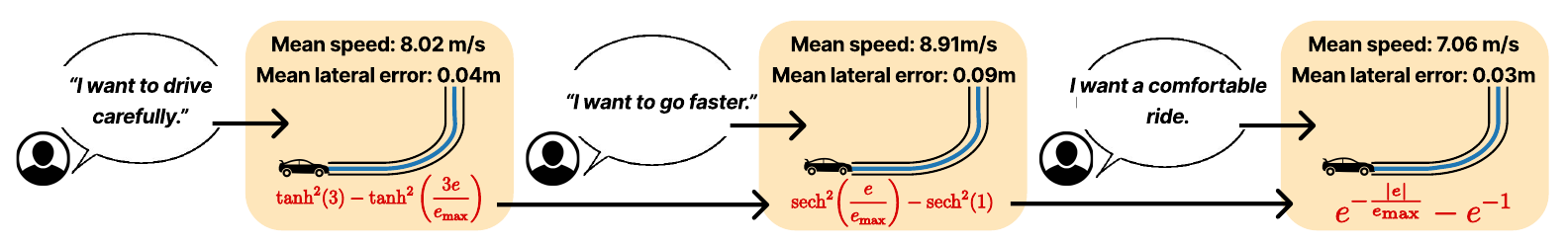}
    \caption{Qualitative results across three rounds of human instructions and feedback with LLM generated barrier functions $\phi$ (shown in red).}
    \label{fig:llm_phi_qualitative}
\end{figure}

In this section, we present additional results demonstrating that the proposed framework can accommodate a diverse set of barrier functions, including those generated by LLMs, without degrading safety performance.

Recall that in Section~\ref{sec:experiment}, we used a fixed barrier function~\eqref{eq:lane_error_safe_set} to characterize the lane-tracking safety requirement. Here, we conduct additional experiments to show that the proposed method can accommodate a wide range of barrier function forms for the same safety specification, without affecting safety performance. This is because the framework operates in the probability space and is agnostic to the specific form of the barrier function.

We first demonstrate that LLMs are capable of generating diverse candidate barrier functions. Fig.~\ref{fig:llm_generated_phi} summarizes the validity, expression diversity, and functional form distribution of the generated barrier functions. Notably, although several example forms are provided in the prompt, none of the generated functions replicate those examples, indicating generative diversity of the tested LLMs. We also observe that more advanced LLMs tend to produce barrier functions with higher validity and diversity. We expect validity to improve further as LLM capabilities continue to advance, alongside improved prompt design. Examples of LLM-generated barrier functions for the lane error safety requirement considered in this paper are shown below.
\begin{align*}
\phi_1(e) &= \exp\!\left(-\frac{|e|}{e_{\max}}\right) - 1  &
\phi_{2}(e) &= \frac{e_{\max}^2 - e^2}{e_{\max}^2 + e^2} \\
\phi_3(e) &= \exp\!\left(-\frac{e^2}{e_{\max}^2}\right) - e^{-1} &
\phi_4(e) &= 1 - \left(\frac{|e|}{e_{\max}}\right)^3
\end{align*}

Next, we show that using these LLM-generated barrier functions, the proposed adaptive PSC framework yields consistent behavior in ensuring long-term safety. Specifically, we evaluate the controller using four distinct barrier functions generated by the LLM for the lane-keeping task, and report the resulting lateral lane error in Fig.~\ref{fig:psc_results_llm_phi}. Despite differences in functional form, all barrier functions encode the same safety requirement, and the resulting system behavior is identical across all cases. This highlights a key advantage of our approach that by operating in the probability space, it remains invariant to the particular choice of barrier function representation.

Finally, we demonstrate that LLM-based barrier function generation can be seamlessly integrated into the adaptive PSC framework in a closed-loop setting. We conduct a multi-turn driving simulation in which the LLM generates barrier functions from scratch, and the adaptive PSC uses them for safe control. Fig.~\ref{fig:llm_phi_qualitative} presents examples of the resulting trajectories and key performance statistics. The results show that the LLM consistently generates valid barrier functions along with compatible estimator configurations, enabling safe driving behavior that aligns with human preferences while maintaining long-term safety guarantees.

All key prompts used in the experiments presented in this section are provided below. 
The full simulation code is available in our GitHub repository.

\newpage

\begin{lstlisting}
// Generating barrier functions

Produce a unique safety barrier function expression in MATLAB syntax using variables 'e' (lateral error) and 'emax' (maximum allowed error; use the same numeric scale as your chosen e_max). The function phi(e, emax) must satisfy: 1) phi(0, emax) >= 0 (usually 1) 2) phi(emax, emax) <= 0 3) phi(-emax, emax) <= 0 Here are some standard working examples: '1-(e/emax)^2', '1-(e/emax)^4', 'cos(pi*e/(2*emax))', or '1-abs(e/emax)'. You may use the examples or return highly varied shapes that still satisfy the constraints (higher-order polynomials, fractional powers where defined, sech/tanh, bounded exponentials, squishing functions, etc.).

Return ONLY JSON in the exact shape: {"phi_expr": "your_expression_here"}

// Integration with adaptive PSC

- phi_expr: You are a mathematical control theory expert designing the PSC barrier for this user.

You are an expert assigning reasonable initial guesses for parameters in an adaptive lane-keeping controller to ensure safety and efficiency.

You are a mathematical control theory expert updating the PSC barrier using the quantitative feedback and user message.

\end{lstlisting}

\end{document}